\newcommand{\bZ}{{\mathbb{Z}}}
\newcommand{\bR}{{\mathbb{R}}}
\newtheorem{thm}{Theorem}[section]
\numberwithin{equation}{section}
\def\bZ{\mathbb{Z}}
\def\bR{\mathbb{R}}
\author[1]{Itamar Hason \thanks{itamarhason@gmail.com}}
\author[2]{Zohar Komargodski \thanks{zkomargodski@scgp.stonybrook.edu }}
\author[3,4]{Ryan Thorngren \thanks{ryan.thorngren@cmsa.fas.harvard.edu}}
\affil[1]{\footnotesize Raymond and Beverly Sackler School of Physics and Astronomy, Tel-Aviv University, Israel}
\affil[2]{\footnotesize Simons Center for Geometry and Physics, SUNY, Stony Brook, NY, USA}
\affil[3]{\footnotesize Department of Condensed Matter Physics, Weizmann Institute of Science, Israel}
\affil[4]{\footnotesize Center for Mathematical Sciences and Applications, Harvard University, Cambridge, MA, USA}
\begin{document}

\title{ Anomaly Matching in the Symmetry Broken Phase: \\ \Large Domain Walls, CPT, and the Smith Isomorphism}

\maketitle

\begin{abstract}

Symmetries in Quantum Field Theory may have 't Hooft anomalies. If the symmetry is unbroken in the vacuum, the anomaly implies a nontrivial low-energy limit, such as gapless modes or a topological field theory. If the symmetry is spontaneously broken, for the continuous case, the anomaly implies low-energy theorems about certain couplings of the Goldstone modes. Here we study the case of spontaneously broken discrete symmetries, such as $\bZ_2$ and $T$. Symmetry breaking leads to domain walls, and the physics of the domain walls is constrained by the anomaly.
We investigate how the physics of the domain walls leads to a matching of the original discrete anomaly. We analyze the symmetry structure on the domain wall, which requires a careful analysis of some properties of the unbreakable $CPT$ symmetry. We demonstrate the general results on some examples and we explain in detail the mod 4 periodic structure that arises in the $\bZ_2$ and $T$ case.  This gives a physical interpretation for the Smith isomorphism, which we also extend to more general abelian groups. We show that via symmetry breaking and the analysis of the physics on the wall, the computations of certain discrete anomalies are greatly simplified. Using these results we perform new consistency checks on the infrared phases of $2+1$ dimensional QCD.
\end{abstract}

\section{Introduction}

Suppose a quantum field theory in $d+1$ space-time dimensions has a spontaneously broken $\bZ_2$ symmetry generated by the unitary operator $U$. Such a theory has two degenerate ground states related by $U$. In this situation the theory admits a protected dynamical excitation that interpolates between these two vacua, known as a domain wall. It may be analyzed by choosing a coordinate $x_\perp$ and frustrated boundary conditions for $x_\perp \to \pm \infty$ which force the system into one of the ground states for $x_\perp \to \infty$ and its $U$-conjugated ground state for $x_\perp \to -\infty$.

The domain wall always admits massless Nambu-Goldstone bosons due to the spontaneously broken translational symmetry in the normal coordinate $x_\perp$, with action given by the Nambu-Goto theory, but in many cases there could be other parametrically light excitations trapped on the domain wall, such as the Jackiw-Rebbi modes we discuss below.

The following basic question is the starting point of this paper: Does the original $\bZ_2$ symmetry act on the Hilbert space of the domain wall? This question is sharply defined when there are light excitations (relative to the bulk excitations) trapped on the wall beyond the obvious translational Nambu-Goldstone modes, but the question makes sense also in various other situations which we will discuss below.

On the one hand, it would seem that the answer is positive since intuitively the $\bZ_2$ symmetry is restored on the wall, since the domain wall is localized where the order parameter for the $\bZ_2$ symmetry vanishes.
On the other hand, the answer seems to be negative since the $\bZ_2$ transformation does not leave the boundary conditions at $x_\perp \to \pm \infty$ invariant---it exchanges the two sides of the wall---and so the $\bZ_2$ symmetry cannot be considered a symmetry.

It turns out that neither of the options above is entirely correct.
The resolution of this general question proceeds as follows. Consider acting with the spontaneously broken $\bZ_2$ operator $U$. This interchanges the two vacua on either side of the wall and hence does not leave the bulk invariant. We then want to apply some ``canonical" reflection symmetry across the wall to restore the boundary conditions (it is canonical in the sense that it always exists; for instance, parity symmetry does not always exist). We obtain such a canonical symmetry from Lorentz invariance, using the $CPT$ theorem, which guarantees us some space-orientation-reversing symmetry, which by combination with a rotation can be chosen to act by reflection in the $x_\perp$ coordinate. To emphasize this, we write it $CP_\perp T$.

We therefore consider the symmetry
 \begin{equation}\label{walltime}
T'=U \cdot (CP_\perp T)~,
\end{equation}
which by construction leaves the boundary conditions invariant and therefore indeed acts on the wall! But observe that this is an anti-unitary symmetry, since $CP_\perp T$ is anti-unitary. Thus, the correct answer to the question about the fate of the spontaneously broken, unitary, $\bZ_2$ symmetry is that it becomes some anti-unitary symmetry acting on the wall.

\begin{figure}
  \begin{centering}
  \includegraphics[width=6cm]{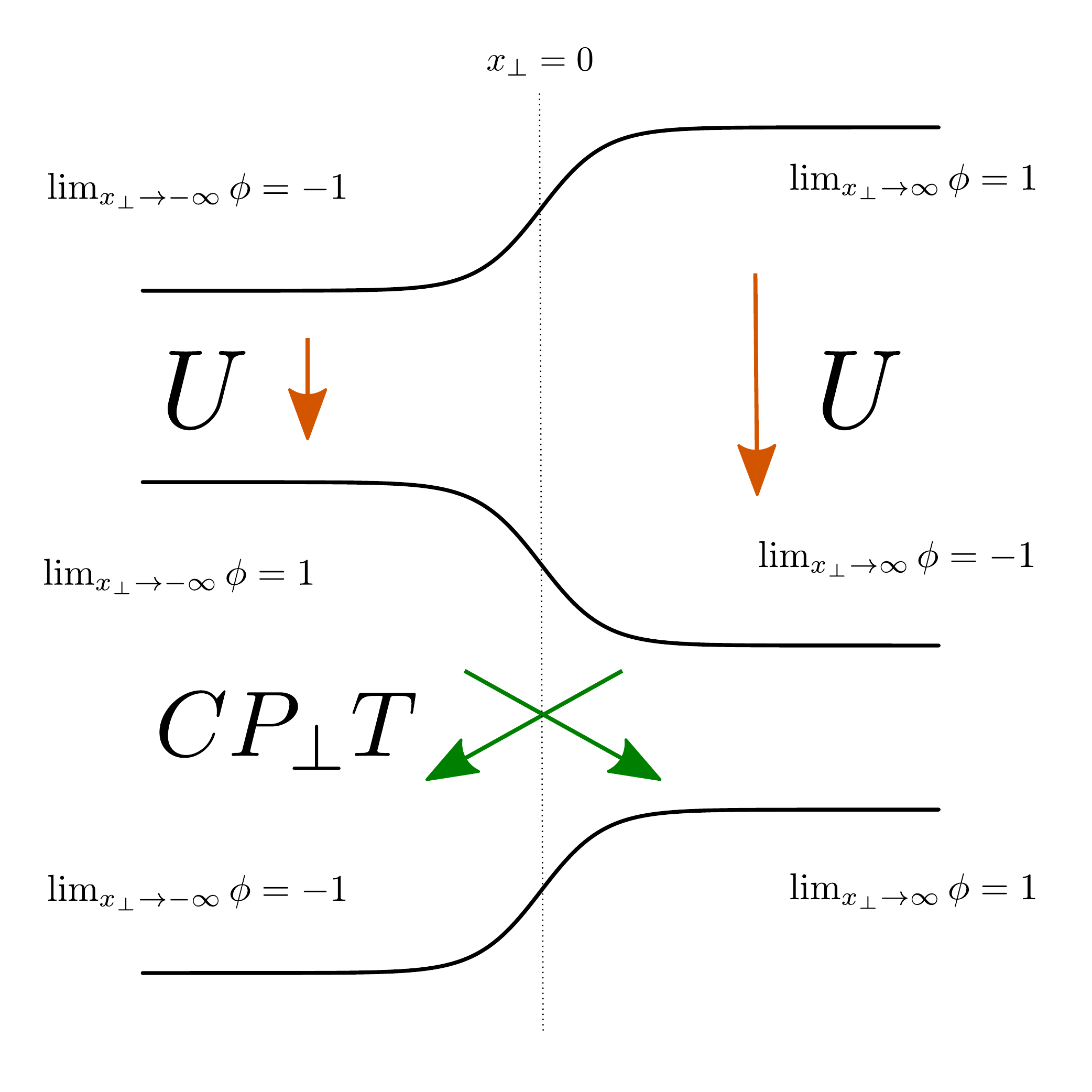}
  \caption{The $\bZ_2$ domain wall is created by imposing frustrated boundary conditions for the order parameter $\phi$ along a coordinate $x_\perp$ or breaking the symmetry with a spatially-varying potential. Indeed, the global symmetry $U$ does not act on the wall but if we combine it with $CP_\perp T$, a canonical symmetry which involves a reflection in the normal coordinate $x_\perp$, then we obtain a symmetry of the domain wall degrees of freedom, which is anti-unitary if $U$ is unitary and vice versa.}
  \label{figdomwall}
  \end{centering}
\end{figure}

By the same argument, if we began with a spontaneously broken anti-unitary $\bZ_2$ symmetry, the domain wall would inherit a unitary symmetry. In fact, we will derive a 4-periodic dimensional hierarchy for $\bZ_2$:
\begin{equation}\label{chain}
  \left\{ \substack{{\rm unitary} \\ U^2 = 1} \right\} \implies \left\{ \substack{{\rm anti-unitary} \\T^2 = 1} \right\} \implies \left\{ \substack{{\rm unitary} \\ U^2 = (-1)^F} \right\} \implies \left\{ \substack{{\rm anti-unitary} \\ T^2 = (-1)^F} \right\} \implies \left\{ \substack{{\rm unitary} \\ U^2 = 1} \right\},
\end{equation}
where $(-1)^F$ is the fermion parity operator and the arrow indicates the induced symmetry on the domain wall.
For bosons the hierarchy is 2-periodic, obtained from the one above by removing $(-1)^F$:
\begin{equation}\label{chainB}
  \left\{ \substack{{\rm unitary} \\ U^2 = 1} \right\} \implies \left\{ \substack{{\rm anti-unitary} \\T^2 = 1} \right\}\implies \left\{ \substack{{\rm unitary} \\ U^2 = 1} \right\}~.
\end{equation}
We will also explore similar hierarchies for larger symmetry groups and domain wall junctions.

With this understanding of the structure of the symmetries at the domain wall or junction, we can ask interesting questions about 't Hooft anomalies. For instance, the original theory determines the dynamics of the domain walls and junctions, but suppose we know the anomalies on the domain walls and junctions, what can we infer about the anomalies of the original theory?

We will show that for $\bZ_2$ symmetries of all four types above, the anomaly on the domain wall determines the original anomaly. The proof of this extends the so-called Smith isomorphism theorem of cobordism theory. For more general symmetry groups, one typically has to study multiple types of domain walls and junctions to obtain the anomaly. However we will also show that in some cases the anomaly on the wall is not uniquely determined by the original anomaly, and with different symmetry breaking potentials there could be different anomalies.

In general it is hard to compute the discrete time reversal or $\bZ_2$ anomalies of an interacting theory. But by repeatedly using the anomaly-matching relations we derive for the domain walls, we can reduce the calculation of the anomaly either to a gravitational anomaly, or all the way down to quantum mechanics, where the computation of the anomaly just amounts to determining the projective representation of the symmetry group on the ground states.

To put this discussion in context, recall that for continuous symmetries with a local 't Hooft anomaly there are essentially two logical options:
\begin{itemize}
\item The vacuum is invariant under the symmetry: In this case there must be some massless modes on which the unbroken symmetry acts (see~\cite{Frishman:1980dq} and references therein). We should think about this as a conformal field theory which may or may not be trivial.
\item The symmetry is broken spontaneously:  There are massless Nambu-Goldstone modes corresponding to the broken symmetries. The anomaly leads to various interactions among these Nambu-Goldstone modes in conjunction with prescribed couplings to background fields which lead to tree-level diagrams that reproduce the anomaly~\cite{Wess:1971yu,Witten:1983tw}.
\end{itemize}

For discrete symmetries the story is less clear and this paper is merely a step in that direction. It is still true that there are essentially two options, corresponding to an invariant vacuum or symmetry breaking. In the former case, some (but not all) discrete anomalies can be reproduced by a topological field theory---massless modes are not always necessary (see, for example,~\cite{Vishwanath_2013,Wang_2014,Kapustin_2014,Burnell_2016,Wang:2017loc,cordova2019anomaly}). In the symmetry breaking case, there are no Nambu-Goldstone bosons but there are domain walls instead. So this paper is essentially about how these domain walls reproduce the original anomaly.  This is the discrete avatar of the question about how Nambu-Goldstone bosons reproduce continuous anomalies. What we find is that for the simplest possible symmetry classes (essentially those in~\eqref{chain},\eqref{chainB}) the domain wall worldvolume theory {\it itself} has to have an anomaly and therefore must support multiple vacua, massless particles, or a topological theory.

For instance, in the fermionic case, we will argue for a general formula, relating the time reversal $T^2=(-1)^F$ anomaly in $2+1$ dimensions, $\nu_3$ (which is defined mod 16) and a $\bZ_2$ anomaly on its $1+1$ dimensional domain wall, $\nu_2$ (which is defined mod 8):
\begin{equation}\label{eqnanomrelintro}
\nu_3 = 2\nu_2 - 2(c_R - c_L) \mod 16,
\end{equation}
where $c_L$ and $c_R$ are the left and right central charges of the theory on the domain wall, respectively. The result~\eqref{eqnanomrelintro} applies when the theory on the wall does not break the $\bZ_2$ symmetry spontaneously. In the event that it does, there is a further reduction to quantum mechanics and the matching of the anomalies is even simpler.  We will discuss the details of the case of a nonsymmetric vacuum on the wall in the main text. The formula~\eqref{eqnanomrelintro} allows us to extract the original time reversal anomaly in 2+1 dimensions from domain wall constructions. It is typically much easier to compute the discrete $\bZ_2$ anomaly in 1+1 dimensions and the central charges are likewise straightforward to compute. Interestingly, as we change the coupling constants of a given 2+1 dimensional theory, different domain walls may appear with different $\nu_2,c_L,c_R$. But due to~\eqref{eqnanomrelintro} the combination on the right hand side is always the same.

In the bosonic case, consider for instance a bosonic theory in 1+1 dimensions (say, free of gravitational anomalies). It may have a $\bZ_2$ symmetry with a 't Hooft anomaly. If the symmetry breaks spontaneously then there is a domain wall which is a kink, essentially a point particle for a low-energy observer. The anomaly implies that time reversal symmetry acts projectively on this particle, meaning with $T^2 = -1$ on the Hilbert space, so there is an exact Kramers degeneracy over the entire spectrum of such kinks in the system with frustrated boundary conditions:
\begin{equation}\label{eqnanomrelintrotwo}
\bZ_2\ {\rm anomaly\ in\ 1+1\ dimensions} \longrightarrow {\rm Kramers\ doublet\ domain\ walls\ (kinks)}~.
\end{equation}

To demonstrate these general ideas we consider three classes of examples:

\begin{itemize}
\item Fermions in $2+1$ dimensions. Such theories have a time reversal anomaly classified by $\bZ_{16}$. For free fermions, one can compute it directly in $2+1$ dimensions by carefully studying the Dirac operator on unorientable space-times \cite{witten2016fermion}. This is quite delicate. We instead compute the anomalies by coupling the theory to a heavy pseudo-scalar which we condense. This reduces the problem to a more familiar problem in $1+1$ dimensions and we can further reduce it to quantum mechanics by studying the domain wall within the domain wall. This example also demonstrates that there can be multiple domain walls with different anomalies depending on the details of the symmetry breaking, but all of their anomalies must match the original theory.

\item  Abelian gauge theory in $1+1$ dimensions. We discuss a particular $\bZ_2$ symmetry in the $\mathbb{C}\mathbb{P}^1$ model and show that it has an anomaly by reducing the problem to the quantum mechanics on the domain wall in a spontaneously broken phase. The domain walls form a Kramers doublet demonstrating~\eqref{eqnanomrelintrotwo}. We show how this surprising $\bZ_2$ anomaly is consistent with the deformations of the theory.

\item Sigma models with a Wess-Zumino term or Hopf term in $2+1$ dimensions. Such theories appear in the infrared of interesting systems such as $2+1$ dimensional gauge theories. These models often have nontrivial anomalies involving time reversal symmetry. We study the symmetries and domain walls of these models. Some of our results provide new consistency checks of conjectured renormalization group flows in $2+1$ dimensional QCD.

\end{itemize}

The outline of the paper is as follows. In Section \ref{secalgebra} we discuss the properties of the $CPT$ symmetry and use it to derive the dimensional hierarchy for $\bZ_2$ symmetries, as well as derive an anomaly-matching condition (by anomaly-matching we mean the relationship between the domain wall anomaly and the anomaly of the original theory). In Section \ref{secexamples} we discuss several examples in detail. In Section \ref{secobs} we give a mathematical perspective on the anomaly matching condition based on cobordism theory, including proofs of the Smith isomorphism and some generalizations.

\medskip

{\it Note Added: As this paper was being completed, we were made aware of a related work~\cite{cordova2019decorated} which studies the reduction from the unitary $U^2 = 1$ to anti-unitary $T^2 = 1$ case of \eqref{chain}, captured by  the classic Smith isomorphism of Section \ref{secZ2smith}. Some related calculations also appear in \cite{wang2019gauge}.}

\section{$CPT$ and the Domain Wall Symmetry Algebra}\label{secalgebra}

\subsection{A Canonical $CPT$ and its Properties}

A generic QFT does not necessarily have a time reversal symmetry, parity, or any unitary global symmetry. But in a unitary QFT with Lorentz invariance there is always an anti-unitary symmetry called $CPT$ which reverses time ($T$), a spatial coordinate ($P$), and may act on internal degrees of freedom ($C$). This $CPT$ symmetry is not unique, of course. For one, we can conjugate by a spatial rotation symmetry to obtain a $CPT$ which involves reflection around a different spatial coordinate. Also, if the theory admits a unitary internal symmetry, $U$, we can consider $U\cdot CPT$, which is another $CPT$-like symmetry.

In spelling out some properties of the $CPT$ symmetry below, we have to be precise about which $CPT$ symmetry we have in mind. One constructive way to think about it is that in any given relativistic QFT there is one (up to rotations) {\it canonical} $CPT$ symmetry which is obtained in the following way. We first deform the theory by arbitrary Lorentz-invariant perturbations. This is guaranteed to break all the internal global symmetries, but there is still one unbreakable $CPT$ symmetry that survives. This is our {\it canonical} $CPT$ symmetry to which the statements below pertain.\footnote{In this discussion we ignore higher-form symmetries, which cannot be broken by local perturbations of the Lagrangian, but these do not introduce any ambiguity into the definition of $CPT$.} Of course, once we have identified this canonical $CPT$ symmetry we can use it in the original, undeformed theory, which possibly has various other symmetries.

There is one subtlety (other than the rotation degree of freedom which we fix when we set the direction of $P$) in the definition of $CPT$ through the procedure above, which is that there is always an unbreakable internal unitary symmetry $(-1)^F$ (where $F$ is the fermion number) which is part of the Lorentz group. So we can still combine $CPT$ with $(-1)^F$ if we wish. However, that will not make any difference for the statements below.

There are several important properties of the canonical $CPT$ symmetry (from now on we often omit the word `canonical'):

\begin{enumerate}
\item Any unitary internal symmetry $U$ commutes with $CPT$:
\begin{equation}\label{commi}U \cdot (CPT) = (CPT) \cdot U\end{equation}
This follows in essence from the Coleman-Mandula theorem. (This also applies for $U=(-1)^F$.)

\item Any time reversal symmetry $T$ commutes with $CPT$ up to the fermion parity:
\begin{equation}\label{comm}T \cdot (CPT) = (-1)^F ~ (CPT) \cdot T\end{equation} The proof is given in Appendix~A. Equation~\eqref{comm} means that on bosonic states $T$ and $CPT$ commute while on fermionic states they anti-commute.

\item \begin{equation}\label{square}(CPT)^2=1~.\end{equation}
The proof of this proceeds as follows: if the right hand side were nonzero and not a c-number it would have to be a unitary non-space time symmetry, which would be in contradiction with the $CPT$ symmetry being canonical, except if it were $(-1)^F$ which is also unbreakable.
It is also easy to rule out the option of a pure c-number on the right hand side of~\eqref{square}:  We cannot absorb this c-number in the definition of $CPT$ since it is anti-unitary and hence $(e^{i\alpha} CPT)^2=(CPT)^2$. But it suffices to assume that the ground state is $CPT$ invariant to arrive at $(CPT)^2=1$ since if the ground state is invariant $CPT|0\rangle$ must give $e^{i\alpha}|0\rangle$ for some $\alpha$ and hence acting on it again and using the anti-unitary nature of $CPT$ we find $(CPT)^2|0\rangle=1$. Now, since $(CPT)^2$ is assumed to be a c-number it must be 1 in all the states and not just the vacuum. Thus it only remains to decide whether $(CPT)^2=(-1)^F$ or $(CPT)^2=1$.
We show in Appendix A that the right answer is~\eqref{square}.\footnote{To decide between these two options one has to be careful about what is meant by $P$. When we write $P$ we always mean a reflection in one coordinate (and not, for instance, a reflection of 3 coordinates in $3+1$ dimensions as is often used in the literature).}

\end{enumerate}

An elegant and nontrivial consistency check of~\eqref{commi},\eqref{comm}, and \eqref{square} is that these relations are compatible with domain wall constructions, meaning that the symmetries on the wall we obtain from combining with $CP_\perp T$ continue to satisfy the claimed commutation relations with the $CPT$ intrinsic to the wall. Let us see how this comes about.

We start from a $d+1$ dimensional QFT with time reversal symmetry, $T$. We assume it is spontaneously broken and hence there are two different vacua related by $T$. We then consider the domain wall between these two vacua. As explained in the introduction we can consider $U = T \cdot CPT$ which is a symmetry that leaves the bulk invariant if the operator $P$ is taken to act perpendicularly to the wall.

Since $U$ is unitary and does not act on the space-time of the wall, it should commute with the $CPT$ symmetry of the wall which we will denote by $(CPT)_{d}$, while $(CPT)_{d+1}$ will be reserved for the original $CPT$ in the theory. $(CPT)_{d+1}$ and $(CPT)_{d}$ can be related by a conjugation by a $\pi/2$ spatial rotation in the plane that includes the vector perpendicular to the wall and a vector on the wall, so $(CPT)_{d} = R(-\pi/2) \cdot (CPT)_{d+1} \cdot R(\pi/2)$. Let us now check that $U$ and $(CPT)_{d}$ commute. First we compute

\begin{equation}
\begin{aligned}
U \cdot (CPT)_d = T \cdot (CPT)_{d+1} \cdot (CPT)_{d}  =T \cdot (CPT)_{d+1} \cdot R(-\pi/2)  \cdot (CPT)_{d+1} & \cdot R(\pi/2) \end{aligned}
\end{equation}
Using $R(-\pi/2) \cdot P = P \cdot  R(\pi/2)$ and that $T,C$ commute with rotations\footnote{Angular momentum is odd under $T$, thus, by anti-unitarity, rotations are even} we get
\begin{equation}\label{finala}
\begin{aligned}
=T \cdot (CPT)_{d+1} \cdot (CPT)_{d+1} \cdot R(\pi) =
T \cdot R(\pi)~,\end{aligned}
\end{equation}
where we have used~\eqref{square}.

On the other hand,
\begin{equation}
\begin{aligned}
 (CPT)_{d} \cdot U = (CPT)_{d} \cdot T \cdot (CPT)_{d+1}  = R(-\pi/2) & \cdot (CPT)_{d+1} \cdot R(\pi/2) \cdot T \cdot (CPT)_{d+1}
\end{aligned}
\end{equation}
Using again $R(-\pi/2) \cdot P = P \cdot  R(\pi/2)$ we get
\begin{equation}\label{finalb}
\begin{aligned}
 R(-\pi) \cdot (CPT)_{d+1} \cdot T \cdot (CPT)_{d+1} = (-1)^F R(-\pi) \cdot T~,
\end{aligned}
\end{equation}
where we used ~\eqref{square} again as well as ~\eqref{comm}. Since $R(2\pi)$ is the same as $(-1)^F$ we find that~\eqref{finala} and~\eqref{finalb} exactly agree, hence $U$ commutes with $(CPT)_d$.

The computation may be repeated beginning with a unitary symmetry which commutes with $(CPT)_{d+1}$ and finding an anti-unitary symmetry on the wall which commutes with $(CPT)_{d}$ up to $(-1)^F$.

Finally, because of the relation $(CPT)_{d} = R(-\pi/2) \cdot (CPT)_{d+1} \cdot R(\pi/2)$, it is easy to see that if $(CPT)_{d+1}^2 = 1$, then so does $(CPT)_{d}$.

\subsection{The 4-Periodic Hierarchy and Some Generalizations}

Now let us see how the 4-periodic dimensional hierarchy for $\bZ_2$ symmetries \eqref{chain} follows from the three properties \eqref{commi},\eqref{comm},\eqref{square}.

First, take $U = T  \cdot  (CPT)$. $U$ is clearly unitary. Further,
\begin{equation}
\begin{aligned}
 U^2 & = T  \cdot  (CPT)  \cdot  T  \cdot  (CPT) = \text{(using~\eqref{comm})} = (-1)^F T  \cdot  T  \cdot  (CPT)  \cdot  (CPT) = \\
 = & (-1)^F T^2  \cdot  (CPT)^2 = \text{(using~\eqref{square})}= (-1)^F T^2.
\end{aligned}
\end{equation}
So if $T^2 = (-1)^F$, $U^2 = 1$ and vice versa.

Next, we take $T' = U  \cdot  CPT$ with $U$ a unitary internal symmetry. $T'$ is anti-unitary, and
\begin{equation}
 \begin{aligned}
  T'^2 & = U  \cdot  (CPT)  \cdot  U  \cdot  (CPT) = \text{(using~\eqref{commi})} = U  \cdot  U  \cdot  (CPT)  \cdot  (CPT) = \\
  & = U^2  \cdot  (CPT)^2 = \text{(using~\eqref{square})} = U^2
 \end{aligned}.
\end{equation}
So $T'^2 = U^2$.

We can generalize this as follows. Given any symmetry group $G$ with a homomorphism $$\varphi:G \to \bZ_2$$ we can arrange a spontaneous symmetry breaking pattern involving a single real order parameter transforming by
\[\phi \mapsto (-1)^{\varphi(g)}\phi,\]
which breaks $G$ down to the kernel $H$ of $\varphi$. Let $U_g$ be the operator corresponding to $g \in G$ (unitary or anti-unitary). There is a $G$-symmetry on the domain wall of the order parameter generated by
\begin{equation}\label{eqnsignrepcpt}
  \tilde U_g = U_g \cdot (CP_\perp T)^{\varphi(g)}.
\end{equation}
(Note that there may still be nontrivial degrees of freedom in the bulk, e.g. if the unbroken symmetries in the kernel of $\varphi$ are anomalous. In any case, $\tilde U_g$ is a symmetry, but may act on both bulk and localized degrees of freedom, depending on the situation.)

Another interesting class of examples are theories with a time reversal symmetry $T$ which squares to a unitary $\bZ_2$ symmetry $T^2 = U$, with $U^2 = 1$. Such $\bZ_4$ time reversal symmetry transformations appear, for instance, in gauge theories in $2+1$ dimensions where $U$ arises from a mod 2 magnetic symmetry~\cite{Witten:2016cio,Gomis:2017ixy,Cordova:2017kue}. We can imagine breaking $T$ spontaneously with $U$ unbroken, corresponding to the  map $\bZ_4 \to \bZ_2$. Repeating the computations above, we find that on the wall we have a unitary symmetry $V$, $V=T \cdot CP_\perp T$, such that
\begin{equation}
 V^2=U \cdot (-1)^F.
\end{equation}
Therefore, the theory on the wall now enjoys a unitary $\bZ_4$ symmetry. If we performed this procedure again, we would obtain an anti-unitary symmetry $T$ with $T^2 = U \cdot (-1)^F$. It looks like we obtain 4-periodicity, but note that this group, as an extension of $\bZ_4$ by fermion parity, actually splits: by the innocuous redefinition $U \mapsto U\cdot (-1)^F$ it becomes the same algebra we started with. Thus we actually obtain a 2-periodic hierarchy unlike in \eqref{chain}. Note that once we make the redefinition $U \mapsto U\cdot (-1)^F$ the original symmetry algebra now takes the form $T^2 = U(-1)^F$. Hence, once one repeats the domain wall construction twice this factor of $(-1)^F$ is physical and cannot be removed. But the hierarchy is still 2-periodic since the symmetry groups $T^2 = U(-1)^F$ and $T^2 = U$ are isomorphic as symmetry groups.

It is also possible to consider symmetry breaking patterns with multiple order parameters, forming a linear representation $V$ of $G$. For instance, we can break $\bZ_4$ down to nothing with two real order parameters transforming in the $\pi/2$-rotation representation of $\bZ_4$. Such a theory has four vacua $\ket{\text{VAC}_k},  ~ k = 0 \ldots 3$, related by the action of the generator of $\bZ_4$, $U$, by $U\ket{\text{VAC}_k}=\ket{\text{VAC}_{k+1}}$, with $k$ defined mod 4.

\begin{figure}
  \begin{centering}
  \includegraphics[width=5cm]{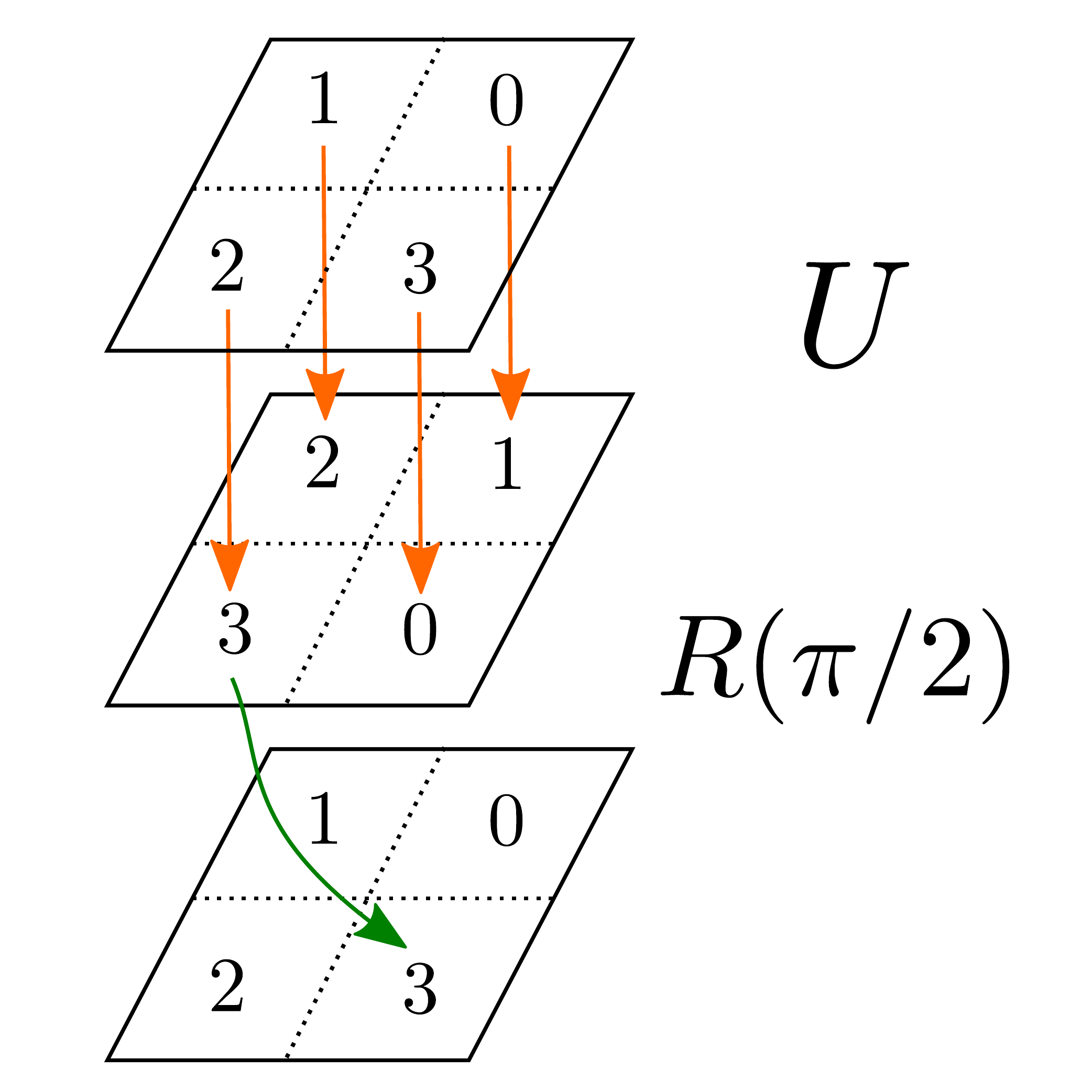}
  \caption{With two real order parameters $\phi_{1,2}$ fully breaking a $\bZ_4$ symmetry we have four ground states, labelled $0,1,2,3$. These can be identified with the four signs of the VEVs of the two order parameters $\phi_{1,2} = \pm$. Choosing a boundary condition for $\phi_{1,2}$ such that they wind the unit circle at infinity along a pair of spatial coordinates we obtain a codimension-2 junction where four domain walls coalesce. The global $\bZ_4$ symmetry does not act on this junction but we can combine it with a $\pi/2$ rotation to obtain a $\bZ_4$ symmetry of the junction. In this case, we did not use $CPT$, so both $\bZ_4$ symmetries are unitary.}
  \label{figdomwall}
  \end{centering}
\end{figure}

In this situation we have a domain wall between any pair of vacua, but there will be no way to assign symmetries to them. However we can consider a 4-way junction of domain walls, where each of the four vacua meet at a corner. Let us suppose they are ordered $0,1,2,3$ counterclockwise over the four quadrants of the plane. Then while $U$ is not a symmetry of the boundary conditions, $U \cdot R(\pi/2)$ is, where $R(\pi/2)$ is a $\pi/2$ counterclockwise rotation. We see $U \cdot R(\pi/2)$ is unitary and satisfies $(U \cdot R(\pi/2))^4 = (-1)^F$. If it is possible to trivially gap out the domain wall degrees of freedom, we can consider this as a symmetry of the codimension-2 system at the 4-way junction.

If we further arrange for such a symmetry to be spontaneously broken to $(-1)^F$ we can repeat the procedure and find that at the junction inside the junction there is an ordinary $\bZ_4$ symmetry. Similar arguments apply in the case that the original
$\bZ_4$ is anti-unitary.

Thus we find two 2-periodic structures

\begin{equation}
  \left\{ \substack{ \rm unitary \\ U^4 = 1  } \right\} \implies \left\{ \substack{\mathrm{unitary} \\ U^4 = (-1)^F } \right\} \implies \left\{ \substack{{\rm unitary}\\ U^4 = 1} \right\},
\end{equation}

\begin{equation}
  \left\{ \substack{{\rm anti-unitary}\\ T^4 = 1} \right\} \implies \left\{ \substack{{\rm anti-unitary}\\ T^4 = (-1)^F} \right\} \implies \left\{ \substack{{\rm anti-unitary}\\ T^4 = 1} \right\}.
\end{equation}

In Section \ref{secobs} we will show how to derive the hierarchies for general symmetry groups and symmetry-breaking patterns.

\subsection{Anomalies}\label{secanommatch}

It is now time to explain how the anomaly of the induced symmetry on the wall relates to the anomaly of the broken symmetry. The discussion in this subsection is a little technical and the reader interested in seeing some explicit examples demonstrating the general results can skip directly to Section \ref{secexamples}. (It is in Section \ref{secexamples} that we present the concrete anomaly matching rules for the $\bZ_2$ class.)

We will work in the context of anomaly-inflow. Let $Z_{\text{SPT}}(X,A)$ be the partition function of the SPT on a $D+1$ dimensional spacetime $X$, possibly with boundary, equipped with a background gauge field $A$ in a fixed gauge, and let $Z_{\text{dyn}}(Y,A')$ likewise be the partition function of our anomalous theory on a $D$ dimensional closed spacetime with background $A'$, also in a fixed gauge. Anomaly in-flow is the condition that the following combination is gauge invariant (independent of the gauge):
\begin{equation}Z_{\text{SPT}}(X,A)Z_{\text{dyn}}(\partial X, A|_{\partial X}).\end{equation}
This allows us to characterize the anomaly by studying the associated SPT.\footnote{So far, all known anomalous theories can be paired with an SPT satisfying anomaly in-flow, although there can be difficulties in identfying the proper  symmetry algebra \cite{thorngren2015higher}.} Moreover, this SPT is determined by its partition functions on closed manifolds, which are topological invariants of $(X,A)$ as well as gauge invariant.

We will see below that when one restricts the gauge field to a submanifold carrying localized degrees of freedom, the type of symmetry that the gauge field couples to can change, depending on the normal bundle of the submanifold. However, because the normal bundle of $\partial X$ is canonically trivial, $A$ and $A|_{\partial X}$ couple to the same kind of $G$ symmetry--that is, an (anti-)unitary element on the boundary acts (anti-)unitarily in the bulk, and an element squaring to $(-1)^F$ on the boundary squares to $(-1)^F$ in the bulk.

Suppose for now $G = \bZ_2$. We can break the symmetry simultaneously in the bulk and the boundary, such that a bulk domain wall terminates on a boundary domain wall. If we suppose also that the SPT has no gravitational component, then
\begin{equation}\label{eqngenanomatch}
  Z_{\text{SPT}}(X,A) = Z_{\text{SPT}'}(Y,A|_Y),
\end{equation}
where $Y$ is the worldvolume of the bulk domain wall and $Z_{\text{SPT}'}$ is a $D$ dimensional SPT partition function we interpret in a moment. We assume that in the case of no gravitational anomaly the boundary theory may be trivially gapped away from the domain wall, so that we can continuously deform
\begin{equation}Z_{\text{dyn}}(\partial X, A|_{\partial X}) \to Z_{\text{dyn}'}(\partial Y,A|_{\partial Y}),\end{equation}
where the right hand side is the partition function of the degrees of freedom localized on the boundary domain wall\footnote{If we cannot nondegenerately gap out the system away from the domain wall, e.g. if there is a nontrivial TQFT leftover on either side of it, then we do not have a simple characterization of the domain wall anomaly.}. It follows that up to adding some boundary-local counterterms,
\begin{equation}\label{eqnpartfunmatch}
  Z_{\text{SPT}'}(Y,A|_Y) Z_{\text{dyn}'}(\partial Y,A|_{\partial Y})
\end{equation}
is gauge invariant. Thus we obtain an anomaly matching condition between our theory and the modes on the wall, in the case of vanishing gravitational anomaly.

Indeed, \eqref{eqngenanomatch} says that we can obtain the bulk SPT, hence the anomaly of $Z_{\text{dyn}}$ just knowing $Z_{\text{SPT}'}(Y,A|_Y)$, which \eqref{eqnpartfunmatch} says is captured by the anomaly of the domain wall. Note however that \eqref{eqngenanomatch} only captures $Z_{\text{SPT}'}$ for spacetimes and gauge backgrounds which appear as a domain wall. In Section \ref{secobs} we will explore this class. A conclusion there is that these spacetimes and gauge backgrounds do not completely capture $Z_{\text{SPT}'}$, which means that the anomaly of the domain wall can be ambiguous. We will see an example of this in Section \ref{secmajorana}.

The restriction $A|_Y$ requires some discussion. We would like to interpret it as a gauge field of a symmetry action on the domain wall degrees of freedom along $Y$. First, observe that the domain wall need not be orientable even though $X$ is. An example is $X = \mathbb{RP}^3$ with its non-trivial $\bZ_2$ gauge field being Poincar\'e dual to an embedded $\mathbb{RP}^2$. In fact, we have the identification
\begin{equation}w_1(NY) = A|_Y,\end{equation}
where $NY$ is the normal bundle of $Y$ and $w_1$ is the first Stiefel-Whitney class, which measures the obstruction to choosing a section of $NY$. Indeed, $A|_Y$ may be identified with the self-intersection of $Y$, or the zero set of a generic section of $NY$, which gives the above identification.

Meanwhile, the tangent bundle of $X$ splits along $Y$ as
\begin{equation}TX|_Y = NY \oplus TY,\end{equation}
and since $TX$ is oriented, we get an identification
\begin{equation}w_1(NY) = w_1(TY) = A|_Y.\end{equation}
This means that the symmetry on the domain wall which couples to $A|_Y$ simultaneously reverses the orientation of $TY$ and $NY$. This is equivalent (up to rotations in $TY$) to a $\pi$-rotation in a plane containing $NY$, which agrees with our identification of the symmetry of the domain wall as a $CPT$ transformation (which in Euclidean signature is a $\pi$-rotation) combined with our internal symmetry.

Thus, we interpret the right hand side of \eqref{eqngenanomatch} as the partition function of a $D$ dimensional SPT obtained on the bulk domain wall, protected by a symmetry which combines the original internal symmetry and the $CPT$ action with $P$ reflecting the normal coordiante.

It is clear that this realization of $CPT$ (ie. a $\pi$-rotation) commutes with all internal symmetries. We will discuss also its relationship with the fermion parity and its commutation relation with anti-unitary symmetries from this point of view in Section \ref{secobs}.

For groups other than $\bZ_2$, there are in general several different ways of breaking the symmetry, partially or totally. For instance, if $G$ has a map to $\bZ_2$, or equivalently a one dimensional real representation, then we can construct a $G$-symmetric domain wall with action \eqref{eqnsignrepcpt}. In order to trivially gap the degrees of freedom away from the domain wall we need there to be both no gravitational anomaly and no 't Hooft anomaly when the symmetry is restricted to the unbroken subgroup. In such a case, we can dimensionally reduce the anomaly calculation as above.

More interesting is the case $\bZ_n$ with $n$ odd. This group has a single nontrivial irreducible representation (up to automorphisms), given by the $2\pi/n$ rotation of $\bR^2$, which represents the target space of two real order parameters, like we have discussed above for $\bZ_4$. There is a codimension-2 defect associated with the symmetry breaking pattern where the $n$ vacua meet at a corner. Let $Y \subset X$ be the $D-1$ dimensional worldvolume associated to this defect inside the $D+1$ dimensional spacetime of the SPT in the anomaly in-flow setup. To have a relation like \eqref{eqngenanomatch} we need to assume both that there is no gravitational anomaly and that the $\bZ_n$ domain wall (which doesn't have any internal symmetries) also has no gravitational anomaly. Likewise in this case we expect to be able to trivally gap the boundary theory away from the boundary defects $\partial Y$, giving an anomaly matching our original theory and the theory on the defect with two fewer dimensions. We will prove this matching in Section \ref{seckernel}.

\section{Examples and Matching Rules}\label{secexamples}

\subsection{A Majorana Fermion in $2+1$ Dimensions}\label{secmajorana}

We will be using Lorentzian signature $(-,+,+)$ in $2+1$ dimensions. Our gamma matrices are\footnote{We take the sigma matrices to be
\begin{equation}\label{sigmas} \sigma^1=\left(\begin{matrix}0 & 1 \\ 1 & 0 \end{matrix}\right)~,\quad \sigma^2=\left(\begin{matrix}0 & -i \\ i & 0\end{matrix}\right) ~,\quad \sigma^3=\left(\begin{matrix}1 & 0 \\ 0 & -1 \end{matrix}\right)  \end{equation}
They satisfy the usual relations
\begin{equation}
 \{\sigma^i,\sigma^j\}=2\delta^{ij}~,\qquad [\sigma^i,\sigma^j]=2i\epsilon^{ijk}\sigma^k~,
\end{equation}
where $\epsilon^{123}=1$.
} \begin{equation} \gamma^0=i\sigma^2~,\qquad \gamma^1=\sigma^1 ~,\quad \gamma^2=\sigma^3~. \end{equation}
They satisfy
\begin{equation}
 \{\gamma^{\mu}, \gamma^{\nu}\}=2\eta^{\mu\nu}~,\qquad [\gamma^\mu,\gamma^\nu]=2\epsilon^{\mu\nu\rho}\gamma_\rho~.
\end{equation}
The Majorana fermion is a real two dimensional spinor $\lambda_\alpha$.
The Lagrangian of a massive Majorana fermion is
\begin{equation}
 \int d^3x \left( i\bar \lambda\gamma^\mu \partial_\mu \lambda+iM\bar \lambda\lambda\right)~.
\end{equation}

Time reversal symmetry and parity act as follows:
\begin{equation}\label{timedef} T:  \lambda(x^0,x^1,x^2)\to\pm \gamma^0\lambda(-x^0,x^1,x^2)~,\end{equation}
\begin{equation}\label{paritydef} P:  \lambda(x^0,x^1,x^2)\to\pm \gamma^1\lambda(x^0,-x^1,x^2)~.\end{equation}
The signs are uncorrelated and arbitrary in principle. We can change the signs at will by combining $P,T$ with fermion number symmetry $(-1)^F$.

The mass term satisfies $T(\bar \lambda \lambda)=-\lambda^T\gamma^0\gamma^0\gamma^0\lambda = \bar \lambda\lambda$. Taking into account the factor of $i$ in the mass term (which is necessary to get a real action) and that $T$ is anti-linear, we find that the under time reversal symmetry $M\to -M$. The same is true under parity. By contrast, the kinetic term is both time reversal and parity invariant.

The theory does not have a notion of charge conjugation symmetry---there are no unitary non-space time symmetries other than $(-1)^F$. Therefore $CPT$ in this theory is just $PT$, modulo the sign choice of how it acts on the fermion, corresponding to including a $(-1)^F$ in the definition.  Three important properties that we can readily verify are
\begin{equation} \label{threed}
 \begin{split}
   (PT)^2 & =1~, \\
  T^2 & =(-1)^F~, \\
  T \cdot PT & = (-)^F PT \cdot T \ .
 \end{split}
\end{equation}
These properties agree with~\eqref{commi}, \eqref{comm},\eqref{square}, and the three properties are independent of whether or not one inserts additional factors of $(-1)^F$ in the definitions of $P$ and/or $T$.

There is a remarkable anomaly of the massless Majorana. Since with $M = 0$ the theory has time reversal symmetry, it is in principle possible to ask about gauging it. This means that we could study the massless Majorana fermion on unorientable manifolds with Pin$^+$ structure (since $T^2 = (-1)^F$ \cite{kapustin2015fermionic}). It turns out that there is an obstruction to doing so which is valued in $\bZ_{16}$. In other words, if we had 16 Majorana fermions and we defined time reversal symmetry to act on all of them with the same sign in~\eqref{timedef}, then we could consistently place the system on unorientable Pin$^+$ manifolds.
This obstruction is interpreted as a $\bZ_{16}$ 't Hooft anomaly.

More generally, theories with time reversal symmetry and $T^2  =(-1)^F$ have such an anomaly $\nu_3$ valued in $\bZ_{16}$~\cite{Fidkowski:2013jua,witten2016fermion,Witten:2016cio} (see also references therein).  For theories of free fermions, if our time reversal symmetry acts on $N_+$ Majorana fermions with sign $+$ in ~\eqref{timedef} and sign $-$ on $N_-$ fermions then the time reversal anomaly is given by
\begin{equation}\label{eqnmajanom}
\nu_3 = N_+-N_- \ \mod 16~.
\end{equation}

Let us see how to derive this interesting fact from domain wall constructions. To that end we would like to break the time reversal symmetry spontaneously, but this is very hard to arrange in a controlled fashion in a theory of Majorana fermions only. However we can modify the theory without changing the anomaly and achieve a simple setup where time reversal symmetry breaking occurs.

We add to the Majorana fermion a real pseudo-scalar $\phi$ coupled via
\begin{equation}
 {\mathcal{L}}={\mathcal L}_{\text{Kinetic}}+i \phi \bar\lambda\lambda+V(\phi^2)~.
\end{equation}
${\mathcal L}_{\text{Kinetic}}$ includes the usual kinetic terms for the pseudo-scalar and Majorana fermion.
$V(\phi^2)$ is an arbitrary potential.

This theory has no non-space time symmetries (other than the usual $(-1)^F$). However it has time reversal symmetry which acts on the fermion as before~\eqref{timedef} and on the pseudo-scalar as
\begin{equation}\label{pseudoscalar} T: \phi(x^0,x^1,x^2)\to -\phi(-x^0,x^1,x^2)~.\end{equation}
It is clear that the time reversal anomaly of this theory is the same as that of the massless Majorana fermion. This is easiest to see by turning off the Yukawa interaction $i\phi \bar \lambda \lambda$ and then giving $\phi$ a large mass. We can then integrate out $\phi$ and arrive back at the free Majorana fermion theory.

But now we can also arrange the potential $V(\phi^2)$ such that $\phi$ condenses, for instance by taking $V(\phi^2) = m_\phi^2 \phi^2 + \phi^4$, with $m_\phi^2$ large and negative. This leads to two minima $\phi=\pm v$ and time reversal symmetry is spontaneously broken (parity is broken too, but $PT$ is not spontaneously broken).
Note that these minima are both gapped since the fermion acquires an effective mass due to the VEV of
$\phi$.

We can now require that for $x^2\to \infty$ we approach the vacuum $\phi(\infty)=v$ and for $x^2\to-\infty$ we approach the vacuum  $\phi(-\infty)=-v$. The system then autonomously finds the least energy configuration with these prescribed boundary conditions at infinity. This configuration is the domain wall, and at low energies it looks like a $1+1$ dimensional object.
The domain wall is not invariant under time reversal~\eqref{pseudoscalar} since it clearly breaks the boundary conditions on $\phi$.
However, as explained above, consider the transformation $T \cdot (CP_\perp T) = T P_2T$. It acts on the pseudoscalar $\phi$ and the fermion $\lambda$ by
(taking the plus signs in~\eqref{timedef},\eqref{paritydef})
\begin{equation}
\begin{split}
 \phi(x^0,x^1,x^2) & \to -\phi(x^0,x^1,-x^2)~, \\
 \lambda(x^0,x^1,x^2) & \to \gamma^0\gamma^2\gamma^0\lambda(x^0,x^1,-x^2)=\gamma^2\lambda(x^0,x^1,-x^2) ~ .
\end{split}
\end{equation}
The transformation of the pseudoscalar $\phi$ now clearly leaves the domain wall configuration invariant (in this particular model $T P_2T=P_2$).\footnote{More precisely,  it leaves the domain wall invariant if the center of mass is at the origin. The same comment applies to the general construction. We can always further combine our transformation with a translation so that the general domain wall configuration is left invariant.}

While in many familiar examples, such as the Ising model, the domain wall would not support any light degrees of freedom other than the translational zero mode, here the theory on the wall is richer~\cite{Jackiw:1975fn}. Since the bulk is gapped, at low energies we therefore have a genuine $1+1$ dimensional theory on the wall.
We can identify the light degrees of freedom on the domain wall by solving the equations of motion of the fermion in the presence of the domain wall
\begin{equation}
 \gamma^\mu\partial_\mu \lambda = \phi(x^2) \lambda~,
\end{equation}
which we treat by separation of variables, $\lambda=h(x^2)\tilde \lambda(x^0,x^1)$, such that
\begin{equation}\label{Diractwo}(\gamma^0\partial_0+\gamma^1\partial_1) \tilde\lambda=0\end{equation}
and
\begin{equation}
 \gamma^2\partial_2 h(x^2) \tilde \lambda = \phi(x^2) h(x^2)\tilde\lambda~.
\end{equation}
The point is that~\eqref{Diractwo} has two normalizable solutions which are left and right moving corresponding to $\gamma^2\tilde\lambda=\pm \tilde\lambda$.
Therefore for positive $v$ we have to take $\gamma_2\tilde\lambda=-\tilde\lambda$ and thus
\begin{equation}
 v>0: \ \gamma_2\tilde\lambda=-\tilde\lambda~,\quad h(x^2)= e^{-\int dx^2 \phi(x^2)} \ ,
\end{equation}
and similarly
\begin{equation}
 v<0: \ \gamma_2\tilde\lambda=+\tilde\lambda~,\quad h(x^2)= e^{+\int dx^2 \phi(x^2)} \ .
\end{equation}
The fermion is clearly localized to the wall as its wave function $h(x^2)$ decays exponentially far from the wall.
The theory on the wall therefore has a chiral fermion, whose chirality is determined by whether $v$ is positive or negative (or the sign of the Yukawa coupling).

Under the residual symmetry on the wall, $TP_2 T$,
$$TP_2 T: \lambda\to \gamma_2\lambda=-h(-x^2){\rm sgn}(v)\tilde \lambda=-{\rm sgn}(v)\lambda~.$$
As a result, the fermion that originally transforms under time reversal symmetry as $\lambda\to \gamma^0\lambda$, when it is stuck to the wall, picks up a minus sign if it is right moving and a plus sign if it is left moving. This is an ordinary $\bZ_2$ symmetry which we can denote by $(-1)^{F_R}$.
In short, one could say that the original time reversal symmetry becomes $(-1)^{F_R}$, which is an ordinary $\bZ_2$ symmetry.

Unitary $\bZ_2$ symmetries of $1+1$ dimensional theories have a $\bZ_8$ 't Hooft anomaly (in the bosonic case, where the anomaly is $\bZ_2$, the anomaly is connected with the charge of the twisted sector as shown in detail in~\cite{Lin:2019kpn})\footnote{In non-chiral theories the $\bZ_2$ anomaly $\nu_2$ manifests itself as the spin of the ground states in the twisted sector. For chiral theories, to obtain $\nu_2$ we have to look at the difference in spin between the twisted and untwisted sectors.}. More concretely, for theories of free fermions, if we have some number of fermions charged under $(-1)^{F_R}$, then the associated 't Hooft anomaly for $(-1)^{F_R}$ is the number of such fermions mod 8.\footnote{This is closely related to the subject of GSO projection in string theory, see~\cite{Ryu:2012he,kaidi2019gso,kaidi2019topological}.}
There is no way for this anomaly to determine the $\nu_3 \in \bZ_{16}$ anomaly of the $2+1$ dimensions theory. However, if we combine this $\bZ_8$ invariant $\nu_2$ with the gravitational anomaly of the domain wall theory, that is $2(c_R - c_L) \in \bZ$, where $c_R$ and $c_L$ are the central charges of the right-moving and left-moving sectors,\footnote{For local theories of fermions, $2(c_R - c_L)$ must be an integer by the quantization of gravitational Chern-Simons terms.} respectively, then we find
\begin{equation}\label{eqnanomrel}
\nu_3 = 2\nu_2 - 2(c_R - c_L) \mod 16.
\end{equation}
As we argue in Section \ref{secZ2smith}, there must be a linear relationship between these three quantities, so to verify the anomaly relation \eqref{eqnanomrel} we only need to check a couple of cases.

First of all, for the $N_+ = 1$, $N_- = 0$ theory discussed above with $v > 0$, the domain wall carries a right-moving fermion $c_R - c_L = 1/2$ which is odd under $T P_2 T$, yielding $\nu_2 = 1$, from which we find $\nu_3 = 1$ from \eqref{eqnanomrel}, matching \eqref{eqnmajanom}. On the other hand for $v < 0$ the domain wall carries a left-moving fermion $c_R - c_L = -1/2$ which is even under $T P_2 T$, yielding $\nu_2 = 0$, from which we again find $\nu_3 = 1$, matching \eqref{eqnmajanom}. These two examples determine \eqref{eqnanomrel} uniquely.

These two examples illustrate that as we change the coupling constants of a theory, there may be multiple domain walls with different anomalies, but they will all have to satisfy the relation~\eqref{eqnanomrel}. Another interesting set of examples can be constructed in the $N_+ = 2, N_- = 0$ theory (two copies of the $N_+ = 1$ theory) because now we can choose the signs of the Yukawa couplings independently between the two fermions. The three cases are:
\begin{equation}++: \qquad c_R - c_L = 1, \quad \nu_2 = 2\end{equation}
\begin{equation}+-: \qquad c_R - c_L = 0, \quad \nu_2 = 1\end{equation}
\begin{equation}--: \qquad c_R - c_L = -1, \quad \nu_2 = 0.\end{equation}
We see all three of them match $\nu_3 = 2$ by \eqref{eqnanomrel}. Note that a nonchiral domain wall with $c_R = c_L$ is only possible if $\nu_3 \in 2\bZ$ by \eqref{eqnanomrel}!

\subsection{The $\mathbb{C}\mathbb{P}^1$ Model in $1+1$ Dimensions}\label{secCP1}

The main purpose of our next example is to derive an analog  of~\eqref{eqnanomrel} when reducing from 1+1 dimensions to quantum mechanics in a bosonic system with a unitary $\bZ_2$ symmetry.
To gradually warm up to the main example, let us first start with free $U(1)$ gauge theory in $1+1$ dimensions,
\begin{equation}\label{Abelian}\mathcal{L}=-{1\over 2 e^2}F^2+{\theta\over 2\pi} F~,\end{equation}
where $F=da$ is the field strength of the dynamical $U(1)$ gauge field $a$.
Charge conjugation symmetry acts by $a_\mu \to -a_\mu$.
Time reversal symmetry acts by
\[T:a_x(x,t) \mapsto a_x(x,-t),\]
\[a_t(x,t) \mapsto -a_t(x,-t)\]
and parity acts by
\[P:a_x(x,t) \mapsto -a_x(-x,t).\]
\[a_t(x,t) \mapsto a_t(-x,t)\]

In these conventions, the term ${\theta\over 2\pi}F$ breaks time reversal, parity, and charge conjugation symmetry for generic $\theta$. The unbreakable combination, hence our canonical $CPT$, is $PT$. Under $PT$ the term ${\theta\over 2\pi}F$ is invariant. A very important fact is that while ${\theta\over 2\pi}F$ is odd under time reversal, parity, and charge conjugation, because ${1\over 2\pi}F$ has integer integrals over closed two-cycles, all three discrete symmetries are preserved at $\theta=\pi$  (and also obviously at $\theta=0$). The theory with generic $\theta$ is only invariant under $PT$, $CP$, and $CT$.

The theory at generic $\theta$ has one ground state. For $0\leq \theta< \pi$ the expectation value of the electric field in the vacuum is $\langle F_{tx}\rangle = {e^2\theta\over 2\pi}$. The symmetries $PT$, $CP$, and $CT$ are all unbroken, consistently with the vacuum being unique.

At $\theta=\pi$ a first order transition occurs and another degenerate vacuum appears where the expectation value of the electric field is $\langle F_{tx}\rangle = -{e^2\over 2}$. The symmetries $PT$, $CP$, and $CT$ are all unbroken but now $C$,$P$,$T$ are all {\it spontaneously} broken.

It is interesting to ask about the domain wall between these two vacua at $\theta=\pi$. However, the equations of motion of the theory~\eqref{Abelian} force the electric field to be constant in space and hence a domain wall between these two vacua would have infinite tension. So the example of pure $U(1)$ gauge theory in $1+1$ dimensions is somewhat exceptional because it has degenerate vacua but the potential barrier between them is infinite.\footnote{Another way to say it is that on a circle the two vacua at $\theta=\pi$ do not ``mix'' and remain exactly degenerate. This can be understood due to an anomaly involving one-form symmetry in $1+1$ dimensions which becomes an ordinary symmetry in quantum mechanics.  Such anomalies are not limited to $U(1)$ gauge theories and there are many other examples with similar consequences. }

The absence of a finite-tension domain wall can be interpreted as follows:
Since the electric field on one side of the wall is
$-{e^2\over 2}$ and on the other side $ {e^2\over 2}$, by the Gauss law, we can interpolate between the two vacua with the aid of a charged particle of charge 1. But since the pure gauge theory does not have dynamical charged particles, the tension (which is the worldline mass of the particle) seems infinite. Therefore the Wilson line of a charge 1 particle serves as a wall between the two vacua but it is not dynamical.

In order to have dynamical domain walls let us therefore add to the theory a charged scalar $\Phi$:
\begin{equation}\label{Abeliani}\mathcal{L}=-{1\over 2 e^2}F^2+{\theta\over 2\pi} F+|D_a\Phi|^2+V(|\Phi^2|)~,\end{equation}
where $V(\Phi)$ is a gauge invariant potential for the charged scalar and $D_a\Phi=\partial \Phi+i a\Phi$.
We will take the scalar $\Phi$ to be massive, i.e. $V(|\Phi^2|)=M^2|\Phi^2|+\lambda|\Phi^4|$ for $M^2$ positive and large compared to $e^2$. $C$, $P$, and $T$ act in the following way on $\Phi$:
\begin{equation}\label{actiononPhi}
C: \Phi(x,t) \to \Phi^*(x,t) ~,\quad
T: \Phi(x,t) \to \Phi^*(x,-t)~,\quad
P: \Phi(x,t) \to \Phi(-x,t)~.
\end{equation}
It is worth giving an intuitive explanation of why we do not perform a complex conjugation of $\Phi$ in the action of $P$. When we reverse the time and the electric field in a motion of a classical particle we do not get a consistent trajectory, unless we in addition change the sign of the charge of the particle.  This is why our time reversal symmetry is accompanied by conjugating $\Phi$. But if we reverse the sign of the electric field along with a reflection in space we do get a consistent trajectory without having to reverse the sign of the charge. One can see this from the Lorentz force law $F = q(E + v \times B)$, which must be $T$-even and $P$-odd.

On a more technical level, the coupling of $\Phi$ to the gauge field $a$ takes place through the combination $i a_\mu (\Phi \partial^\mu \Phi^* - \Phi^*\partial^\mu \Phi)$. The difference between the time reversal and parity is that the former leads to another sign due to the factor of $i$ in front and hence we need to perform a complex conjugation.


At $\theta=\pi$ the theory~\eqref{Abeliani} still has two exactly degenerate vacua because integrating out the massive $\Phi$ cannot lead to terms which break the symmetries $C,P,T$. The domain wall (kink) between the two vacua is just our $\Phi$ particle.

This model clearly has no anomalies. The easiest way to see it is that for $M^2<0$ (and large) the $\Phi$ field condenses and we have a single trivial vacuum.

The story becomes much more interesting (and relevant to the main subject of this paper) if there are two (or more) species of $\Phi$:
\begin{equation}\label{Abeliani}\mathcal{L}=-{1\over 2 e^2}F^2+{\theta\over 2\pi} F+\sum_{k=1,2}|D_a\Phi^k|^2+V(|(\Phi^1)^2|,|(\Phi^2)^2|)~.\end{equation}

For generic choices of the potential there is still no anomaly since we can have one of the $\Phi$'s be heavy and the other condense and we would thus end up with a trivial vacuum. However, we can require the following version of charge conjugation symmetry:
\begin{equation}\label{discrete}C': a_\mu \to -a_\mu~,\quad \Phi^1\to (\Phi^2)^*~,\quad \Phi^2\to -(\Phi^1)^*~.\end{equation}
This symmetry $C'$ precludes $\Phi^1,\Phi^2$ from having different signs for their mass squared so we cannot drive the system to a trivial phase quite as easily.  A subtle point is to note that on the scalar fields $(C')^2\Phi^{1,2}(C')^{-2}=-\Phi^{1,2}$. Since this minus sign is a gauge transformation, strictly speaking,$(C')^2=1$. But in some sense that we will explain below this minus sign ``comes to life'' on the domain wall because there is de-confinement there.

In bosonic systems in 1+1 dimensions the anomalies for a $\bZ_2$ symmetry are classified by $\bZ_2$ as well (the anomaly inflow term is just $i\pi \int A^3 $ where $A$ is a background $\bZ_2$ gauge field). It turns out that $C'$ has such a 't Hooft anomaly. We will see below several derivations of that fact, starting from the domain wall construction which shows that the domain wall furnishes a Kramers doublet~\eqref{eqnanomrelintrotwo}.

Consider for instance the potential \begin{equation}\label{poten}V(|\Phi^1|^2,|\Phi^2|^2)=M^2(|\Phi^1|^2+|\Phi^2|^2)+\lambda(|\Phi^1|^4+|\Phi^2|^4)~.\end{equation}
This potential obeys the $C'$ symmetry and we can imagine adding various other interactions to the Lagrangian to break all the other discrete symmetries (except for the unbreakable $PT$ symmetry).\footnote{For instance, we can break the ``naive'' charge conjugation symmetry $C$ $$ C: a_\mu\to -a_\mu~,\quad \Phi^1\to (\Phi^1)^*~,\quad \Phi^2\to (\Phi^2)^*$$
by adding the operator $i(\Phi^1(\Phi^2)^*-\Phi^2(\Phi^1)^*)$ with a small coefficient.}
We take $\theta=\pi$ and $M^2>0$.
The theory has two vacua, where the degeneracy is protected by $C'$.
Unlike in the pure gauge theory, now the domain wall has finite energy since we have the $\Phi$ particles which have charge 1.

As always, to understand what remains of $C'$ on the domain wall we have to compose it with the unbreakable $PT$ symmetry. It is straighforward to see that $C'PT$ acts as follows on the $\Phi$ particles:
$$C'PT: \Phi^{1} \to \Phi^2 ~,\quad \Phi^2 \to -\Phi^1~.$$
This is an anti-unitary symmetry that acts on the domain wall. Therefore, we have established that in this bosonic system the domain wall has two states, corresponding to the excitations $\Phi^{1,2}$, which realize a time reversal symmetry $T'=C'PT$ action as
$$T'|\Phi^1\rangle = |\Phi^2\rangle ~, \quad T'|\Phi^2\rangle=-|\Phi^1\rangle~.$$
In particular, it is impossible to lift the degeneracy on the wall due to this $T'$ symmetry, which satisfies $T'^2=-1$ and hence realized projectively on the domain wall. This is our Kramers doublet, which has anomaly polynomial $i\pi\int_{{\cal M}_2} w_1^2$ \cite{kapustin2014symmetry}, where $w_1$ is the orientation class, which can be thought of as the gauge field that couples to $T'$. When we apply our general anomaly-matching of Section \ref{secobs}, we will see this matches the 1+1D anomaly $i\pi\int_{{\cal M}_3}A^3$ for the background gauge field that couples to $C'$, as desired.

Let us now give an independent derivation for this anomaly using the results of \cite{abelianhiggs2017}. Considering the class of models~\eqref{Abeliani}, we can choose the potential to preserve an $SU(2)$ global symmetry if the potential is only a function of $|\Phi^1|^2+|\Phi^2|^2$. The model then automatically also admits the charge conjugation symmetry $C$ (and $C'$). Altogether, taking into account the gauge transformations, the symmetry of the model is $O(3)$. In \cite{abelianhiggs2017} it was argued that there is an anomaly $i\pi \int_{{\cal M}_3}w_3(O(3))$. Restricting to the $\bZ_2$ subgroup of the scalar matrix $-1 \in O(3)$, which is our $C'$, we find $w_3(O(3)) = A^3$, so this result is already enough to imply the $C'$ anomaly.

However, there is another possible anomaly for the $O(3)$ symmetry: $i\pi \int_{{\cal M}_3}w_1^3(O(3))$, which would also contribute $i\pi\int_{{\cal M}_3}A^3$ and lead to a trivial anomaly. We need to show that the $w_1^3$ anomaly is absent for the $O(3)$ symmetry above. To do so, assume towards a contradiction that there were such an anomaly and consider restricting $O(3)$ to the subgroup of the diagonal matrix with eigenvalues $+1, +1, -1$, which is conjugate to $C$. Then we would find an anomaly $i\pi\int_{{\cal M}_3}A^3$ for $C$. However, we have shown $C$ is anomaly-free by giving a symmetric deformation to a trivial theory, a contradiction, so the total $O(3)$ anomaly is just the $w_3$ term.

Finally, we make some comments on emergent symmery in the model~\eqref{Abeliani} with $\theta=\pi$. If we choose the potential to respect $SU(2)$ symmetry, e.g.
\begin{equation}\label{poten}V=M^2(|\Phi^1|^2+|\Phi^2|^2)+\lambda(|\Phi^1|^2+|\Phi^2|^2)^2,\end{equation}
and take $M^2>0$ then we have a domain wall with a doubly degenerate ground state, which we have analyzed above in detail.
But if we take $M^2<0$ the scalar fields condense and we obtain the $\mathbb{C}\mathbb{P}^1$ model at $\theta=\pi$. This flows to a conformal field theory which is the $SU(2)_1$ WZW model (see~\cite{Lajko:2017wif} for the references on this classic result). The symmetry of the infrared model is $SO(4)$ in which the ultraviolet $O(3)$ symmetry is contained. The above discussion fixes the anomaly of the $O(3)$ subgroup of $SO(4)$. In fact this is enough to fix the whole $SO(4)$ anomaly. This will be useful in the next subsection.

Indeed, $SO(4)$ has two Chern-Simons levels, one coming from the Pontryagin class and the other from the Euler class \cite{BrownSO}. When restricting to $O(3)$ we find that the former yields the $w_1^3$ anomaly while the latter yields the $w_3$ anomaly we want. This follows from the fact that the Euler class of a rank 4 bundle is $w_4$ mod 2, and when we restrict the vector representation of $SO(4)$ to $O(3)$ it splits as the sign of $O(3)$ plus the vector of $O(3)$, and so $w_4$ splits as $w_4(SO(4)) = w_1(O(3)) w_3(O(3)) = Sq^1 w_3(O(3))$. Thus the $SO(4)$ anomaly is the level 1 Chern-Simons term corresponding to the Euler class in $H^4(BSO(4),\bZ)$. We will give another interpretation of this anomaly in the next subsection.

For other anomalies in the 1+1D abelian Higgs model, especially with regards to the anomaly matching between charge conjugation, parity, and time reversal, coming from the $CPT$ theorem, see \cite{Sulejmanpasic:2018upi} and references therein. A paper which recently showed the anomaly for the $SU(2)_1$ point is in some sense extremal for $c = 1$ is \cite{Lin_2019}. For recent generalizations to the $\mathbb{CP}^N$ model, see \cite{wan2018new}.

\subsection{The $S^4$ Sigma Model in $2+1$ Dimensions}\label{secS4model}

In this section we present another example of the $\bZ_2$ anomaly matching in the chains~\eqref{chain},\eqref{chainB} but this time for time reversal symmetry in 2+1 dimensional {\it bosonic} theories. As one may guess, the domain wall construction leads to a bosonic theory in 1+1 dimensions with a unitary $\bZ_2$ anomaly, exactly of the same type studied in detail in the last subsection. So the bosonic time reversal anomaly in 2+1 dimensions that we study here reduces to the $\pi i \int A^3$ anomaly on the domain wall, and that, in turn, can be further reduced by symmetry breaking on the wall to a Kramers doublet in quantum mechanics.

As a general comment, among the renormalizable quantum field theories in 2+1 dimensions, it is not entirely trivial to write a bosonic theory with a time reversal anomaly. One can do it in a theory with fermions which obey a spin-charge relation, so that only bosonic states are gauge invariant. Such constructions typically lead to rather more complicated phases than those we are interested in. So we will instead study a manifestly bosonic theory in that there are no fermions in the Lagrangian nor does the WZW term require a choice of a spin structure, but it will come at the price of being a non-renormalizable sigma model. Of course, the question of renormalizability is not important for the discussion of anomalies which is why we are allowed to proceed.

We study a (bosonic) non-linear sigma model in $2+1$ dimensions with target space $S^4$. We denote the field as a real 5-vector $\vec n = (n_1,n_2,n_3,n_4,n_5)$ satsifying the constraint $|\vec n|^2 = 1$. We also denote by $\Omega$ the volume 4-form on $S^4$ normalized by $\int_{S^4} \Omega = 1$. We take as our action the (Euclidean) WZW action
\begin{equation}S = \int_X d^3 x  (\partial \vec n)^2 + 2\pi i\int_Z \hat n^* \Omega,\end{equation}
where $Z$ is a 4-manifold with $\partial Z = X$ and $\hat n: Z \to S^4$ restricts to $\vec n$ on the boundary. It is always possible to find such a filling $Z$ and extension $\hat n$ because $\Omega_3^{SO}(S^4) = 0$. However, the exponentiated action does not actually depend on the choice of $(Z,\hat n)$ because the coefficient is $2\pi i$ and the periods of $\hat n^*\Omega$ over closed 4-manifolds are integers by our normalization.

Besides the $SO(5)$ rotation symmetry of $\vec n$, the kinetic term above also has a unitary antipodal symmetry
\begin{equation}\label{cconj}C:\vec n \mapsto - \vec n,\end{equation}
extending the $SO(5)$ group to $O(5)$, as well as spacetime parity and time reversal symmetries $P$ and $T$ acting trivially on $\vec n$. However, the WZW term breaks the symmetry~\eqref{cconj} and only has the combined symmetries $PT, CP,$ and $CT$.

We are interested in the 't Hooft anomalies of $CT$ and of the enlarged group $SO(5) \times \bZ_2^{CT}$. First we would like to understand the pure anomalies of the $CT$ symmetry. For this sake we will construct a $CT$ domain wall. Since for now we will not be interested in the $SO(5)$ symmetry, we will (partially) break it in order to simplify the domain wall construction. A particularly simple potential with $SO(4)$ symmetry on $S^4$  may be defined by the square of the``height map"
\begin{equation}W(\vec n) = -n_5^2~,\end{equation}
which has a minimum at the south pole $n_5 = -1$ and at the north pole $n_5=1$ and is maximal over the equatorial $S^3$. This potential breaks explicitly $SO(5) \times \bZ_2^{CT}$ down to its $SO(4)\times \bZ_2^{CT}$ subgroup.
The $SO(4)$ rotates the first four coordinates $(n_1,n_2,n_3,n_4)$ and $CT$ flips the sign of all $n$'s along with reversing the time coordinate.

As in Section \ref{secmajorana}, we will implement this potential by a coupling to a real scalar $\phi$ with total action
\begin{equation}\label{deformedaction}S' = S + \int_X d^3 x\left( (\partial \phi)^2 + V(\phi^2) + \phi n_5 \right),\end{equation}
where $V(\phi^2)$ is a Landau-Ginzburg potential. This coupling preserves the subgroup $SO(4) \times \bZ_2^{CT}$ where $CT$ acts on $\phi$ by
\begin{equation}CT: \phi \mapsto -\phi\end{equation}
(accompanied by reflection of the time coordinate).
We can give $\phi$ a very large mass in $V(\phi^2)$ and the full $SO(5) \rtimes \bZ_2^{CT}$ symmetry will be restored, so the $SO(4) \times \bZ_2^{CT}$ anomaly of $S'$ must match the $SO(4) \times \bZ_2^{CT}$ anomaly of $S$ for the unbroken subgroup. Since we can do this for any $SO(4) \times \bZ_2^{CT}$ subgroup by choosing different axes for our coupling potential, this gives a very strong constraint on the $SO(5) \times \bZ_2^{CT}$ anomaly of $S$.

To study the $CT$ domain wall, we choose $V(\phi^2)$ in such a way that $\phi$ condenses and we thus have two vacua, related by $CT$. We then choose the frustrated boundary conditions for $x_2 \to \pm\infty$ such that we approach the vacuum $\phi = \pm v$, respectively. The low energy degrees of freedom for $\vec n$ will be paths from $n_5 = 1$ at $x_2 = -\infty$ to $n_5 = -1$ at $x_2 = +\infty$. Such paths are given by  great semicircular paths from the north to south pole of $S^4$ and are parametrized by the equatorial $S^3$ where the path crosses. Thus, the low energy degrees of freedom on the wall are described by a $1+1$ dimensional NLSM with target $S^3$.

Further, the winding number of the crossing points around the equatorial $S^3$ is the same as the winding number of the paths around the big $S^4$, so the level 1 WZW term becomes the level 1 WZW term on the domain wall, yielding the $SU(2)_1$ WZW theory in the infrared on the domain wall.\footnote{The 2+1D $S^4$ sigma model also arises in the low energy limit of the Higgs phase of $SU(2)$ gauge theory with two fundamental Higgs fields. The WZW level equals the Chern-Simons level of the $SU(2)$ gauge field \cite{RABINOVICI1984523} and this leads to another way to characterize the domain wall.}

We must study the induced unitary symmetry on the wall which comes from the sopntaneously broken anti-unitary $CT$. The role of the canonical $CPT$ symmetry in the original $2+1$ dimensional theory is played by $PT$. We choose our $P = P_2$, reflecting $x_2 \mapsto -x_2$. As a result, both $CT$ and $PT$ reverse the domain-wall profile. Their product, $CP$ thus acts as a unitary symmetry $U$ on the domain-wall degrees of freedom. If we write the $S^3$ degree of freedom on the wall as a 4-vector $\vec l = (l_1,l_2,l_3,l_4)$ by
\begin{equation}l_j(x_0,x_1) = n_j(x_0,x_1,0),\end{equation}
we find
\begin{equation}U: \vec l \mapsto - \vec l.\end{equation}
Meanwhile, the $SO(4)$ enjoyed by the action~\eqref{deformedaction} acts in the usual way on this 4-vector. Therefore, the transformation $U$ in this particular case is in fact part of $SO(4)$. (We could break the original $SO(5)$ symmetry completely, while $U$ would have survived on the domain wall as long as $CT$ is retained.)

Since the theory on the wall flows at long distances to the $SU(2)_1$ WZW model, we can now use the results of the previous section \ref{secCP1}. There, $U$ acts as a combined charge conjugation and flavor rotation. See also a discussion in~\cite{metlitski2018intrinsic}. We found in the previous subsection the anomaly
\begin{equation}\label{eqnWZWanom1}
  \frac{1}{2} A^3 \in H^3(B\bZ_2,U(1)),
\end{equation}
where $A$ is a background $\bZ_2$ gauge field coupled to $U$. As we will describe in Section \ref{secsmithprops} (cf. \eqref{eqnsmithgroupco}), this implies that $S$ has the $CT$ anomaly
\begin{equation}\label{bostime}
  \frac{1}{2} w_1(TX)^4 \in \Omega^4_O,
\end{equation}
which in some sense is similar to $\nu = 8$ in the $\Omega^4_{Pin^{+}} = \bZ_{16}$ discussed in subsection \ref{secmajorana}. The connection between the bosonic time reversal anomaly~\eqref{bostime} and the $\bZ_2$ anomaly on the wall~\eqref{eqnWZWanom1} is another instance of anomaly matching in our chain~\eqref{chain}.

We now return to the problem of fixing the anomaly of our sigma model with $S^4$ target space including the continuous symmetries. In the 1+1 dimensional $\mathbb{CP}^1$ model the $SO(3) \times \bZ_2 = O(3)$ symmetry is manifest and the whole anomaly polynomial may be written
\begin{equation}
  \frac{1}{2} A^3 + \frac{1}{2} A w_2(SO(3)) \in H^3(BO(3),U(1)).
\end{equation}
For the $2+1$ dimensional theory this implies the anomaly
\begin{equation}
  \frac{1}{4} w_1(TX)^4 + \frac{1}{2} w_1(TX)^2 w_2(SO(3)) \in \Omega^4_{SO}(BO(3),\xi),
\end{equation}
where $\xi$ is the fundamental representation of $O(3)$. The first term is what we discussed above, namely the bosonic time reversal anomaly. The second term is interesting -- it represents a mixed anomaly between time reversal symmetry and $SO(3)$. Such mixed anomalies between time reversal symmetry and continuous global symmetries are familiar from theories with fermions, but they can also arise in bosonic models, and the $S^4$ sigma model with a WZW term is a nice example of that.

In fact, we can actually use the $SO(4)$ anomaly we derived for the $SU(2)_1$ theory of the previous subsection to fix the $SO(5) \times \bZ_2^{CT}$ anomaly directly. We find by inspection the only possibility is given by the twisted Euler class of the vector representation of $O(5)$:
\begin{equation}\label{eqnO5anom}
  e(W) \in H^5(BO(5),\bZ^{CT}) = H^4(BO(5),U(1)^{CT}).
\end{equation}

We have seen an Euler class appear twice for anomalies of different WZW models, and there is a particularly elegant reason why. The Euler class appears very naturally in the study of domain walls, and we discuss it more and give a definition in Section \ref{secsmithmaps}. For now, we note that one can derive the anomalies above right from the obstruction theory of the WZW term, as was done for continuous symmetries in \cite{lapa2017topological} but which works in general and will work for all symmetries, including anti-unitary ones. It is based on considering the homotopy quotient of the target space by the symmetry group (see \cite{thorngren2018gauging} for a review). Indeed, gauging a target-space symmetry is the same as extending the theory to one whose target is the homotopy quotient $X//G$, which sits in a fibration
\[X \to X//G \to BG.\]
Then, the anomaly is the obstruction to extending the WZW class from $X$ to $X//G$. When $X$ is an $n$-sphere and $G$ is a subgroup of $O(n+1)$, this obstruction is always the Euler class, see \cite{bott1995differential}.

%

\subsection{Some Properties of the Spin $\mathbb{CP}^1$ Model in $2+1$ Dimensions}

In this section we clarify some of the basic properties of the Spin  $\mathbb{CP}^1$ Model in $2+1$ Dimensions. We discuss its symmetries, its relation to the $\mathbb{CP}^1$ model at $\theta=\pi$, and some of its anomalies. We also remark on the relation of this model to QCD$_3$ and perform an interesting new consistency check on some conjectured RG flows.

To warm up, we first consider a theory of $2k$-many 2-component complex fermions in $2+1$ dimensions. These can be considered as $4k$-many 2-component Majorana fermions $\lambda_j$. Following our analysis in Section \ref{secmajorana}, we consider a time reversal action (choosing all positive signs)
\[T:\lambda(x^0,x^1,x^1) \mapsto \gamma^0 \lambda(-x^0,x^1,x^2)\]
and parity
\[P:\lambda(x^0,x^1,x^1) \mapsto \gamma^1 \lambda(x^0,-x^1,x^2).\]
As in Section \ref{secmajorana}, we verify $T^2 = (-1)^F$ with the anomaly $4k$ mod $16$. Further, $PT$ satisfies the three properties necessary for the canonical $CPT$ symmetry.

This theory also has an interesting $U(2)$ flavor symmetry for which our fermions form $k$ doublets. One checks that this symmetry commutes with $T$ (and $P$), forming the full group $(SU(2) \times U(1) \times \bZ_4^T)/\bZ_2^F$, which we write as shorthand $U(2) \times T$. The anomaly can be roughly studied as the combination of $SU(2) \times T$ and $U(1) \times T$ anomalies. The former is the $\bZ_4$ subgroup $k$ mod $4$ of a $\bZ_4 \times \bZ_2$ classification and the latter is a $\bZ_4$ subgroup $2k$ mod $8$ of a $\bZ_8$ \cite{freed2016reflection}.

To make contact with the non-linear sigma model,  we would like to couple our fermions to a unit 3-vector field $\vec n \in \{ (n^1,n^2,n^3) \in \bR^3 \ | \ (n^1)^2 + (n^2)^2 + (n^3)^2 = 1\}$ by the Yukawa coupling (using Lorentz signature)
\[i n^j \bar \psi \tau_j \psi,\]
where $\tau_j$ are a basis of the $su(2)$ flavor algebra. In the presence of this coupling, the fermions become gapped and the theory flows to a non-linear sigma model with target $S^2$ we refer to as the spin $\mathbb{CP}^1$ model, with action
\begin{equation}\label{SpinCP}\int d^3 x(\partial n)^2 + i \pi k {\rm Hopf}(n),\end{equation}
where, as discussed in \cite{AbanovWiegmann}, when $k$ is odd the extra topological term, known as the Hopf term, introduces nontrivial dependence on a spin structure.

One can of course study the sigma model~\eqref{SpinCP} by itself, without the additional fermions and Yukawa couplings (see~\cite{Freed:2017rlk} for some background). This model is also referred to as the $\theta=\pi$ $\mathbb{CP}^1$ model in 2+1 dimensions.\footnote{Even though the fermions are all gapped due to the Yukawa couplings, we make a distinction between the model with the fermions and the pure model without the fermions for two reasons: First, the model with the fermions has an additional $U(1)$ symmetry inside $U(2)$ which only acts on the heavy fermions and is not present in the pure NLSM~\eqref{SpinCP}. Second, the pure NLSM has a spin-charge relation between Skyrmions and fermions while there is no such spin-charge relation in the theory with heavy fermions. These issues would not be important for us but it is useful to keep them in mind.}
The main consequence of the second term in~\eqref{SpinCP} for odd $k$ is to  render the Skyrmion into a fermion (which follows from the Callias index theorem).

We find our discrete symmetries act as
\[P,T: n^j \mapsto -n^j,\]
which indeed commute with the obvious $SO(3)$ rotation symmetry of $n^j$. There is also the Skyrmion number $S$, ie. the winding number of $n$ over a spatial slice, which is a conserved charge which generates a symmetry $U(1)_S$. We see that $T$ takes the Skyrmion to the anti-Skyrmion, hence commutes with the $U(1)_S$ group, which is given by $e^{i \alpha S}$.

In fact we can identify the $SO(3) \times U(1)_S$ group of the spin $\mathbb{CP}^1$ model with the quotient of $U(2)$ by the fermion parity, using the Callias index theorem which says that the Skyrmion binds $k$ complex fermionic zero modes. Thus, including the trivial massive fermion the symmetry group of the spin $\mathbb{CP}^1$ model is also $U(2) \times T$. We would like to match the anomaly with the free fermion theory.

The main reason for us to discuss this model here is that there is an apparent difficulty that the action~\eqref{SpinCP} only sees $k$ mod 2 but the anomaly by the free fermion calculation depends on $k$ mod 4. Let us therefore  discuss $k$ even and try to find the discrepancy. This apparent discrepancy is very important -- as we will later see the same discrepancy arises when we try to match the anomaly of~\eqref{SpinCP} with QCD$_3$ and the resolution of the discrepancy is going to be the same.

The resolution is to realize that the $k=0$ model, despite being a bosonic theory (when the transparent massive fermions are ignored) has a mixed anomaly between the time reversal symmetry and the Skyrmion number. This is quite similar to what we have found in the previous section about the $S^4$ bosonic WZW model. This anomaly has been derived and described in \cite{abelianhiggs2017}. Let $F$ be the field strength of the gauge field that couples to $U(1)_S$.\footnote{Since time reversal flips the Skyrmion number, the background gauge field for $U(1)_S$ is invariant (as a one-form) under time reversal. This is important in the argument below.} Then the anomaly we are talking about takes the form
\[\frac{1}{2} w_1^2 \frac{F}{2\pi}.\]
In \cite{abelianhiggs2017} a derivation was given in terms of an Abelian Higgs model that flows to the $k=0$ model~\eqref{SpinCP}. For another way to derive this anomaly, one may compactify along an $S^2$ carrying $2\pi$ flux for the $U(1)_S$ gauge field. Because of the nontrivial extension
\[\bZ_2 \to U(2) \to SO(3) \times U(1)_S,\]
we find that the resulting $0+1$ dimensional system has a pair of ground states transforming in the spin-$1/2$ representation of $SO(3)$. Because our time reversal commutes with this action and this representation is quaternionic, we also have $T^2 = -1$ on these ground states, which is the meaning of the above anomaly.

Most importantly for us, this mixed anomaly means that if we redefine time reversal symmetry by the $\bZ_2$ subgroup of $U(1)_S$:
\begin{equation}\label{TwoTs} T \to T (-1)^S\end{equation}
we shift the pure $T$ anomaly by $\nu = 8$ mod $16$. This may be demonstrated by our domain wall picture as well.\footnote{In more detail, at one step of the domain wall construction, we find that the even $k$ theory reduces to the $1+1$ dimensional compact boson on the domain wall, analogous to the discussion in Section \ref{secS4model}. Using the action of $(PT)T$ on the wall, we find a unitary symmetry $U$ which acts as a $\pi$ rotation of the compact boson. As is well known, the anomaly of such an action is not determined unless we know how $U$ acts on the vortex. A nice property we also used above is that the vortex number on the 1d wall equals the Skyrmion number of whole 2d configuration. Thus we see that the two different time reversal symmetries give rise to unitary symmetries with opposite anomalies, as expected. When $U$ acts trivially on the vortex we have $\nu = 0$ mod $16$ and when $U$ acts nontrivially we have $\nu = 8$ mod $16$.}

The apparent mismatch between the number of fermions being $4k$ mod 16 (hence obeying a time reversal anomaly for even $k$ which is not divisible by 4) and the theory at even $k$ not having a topological term is resolved by the fact that the theory~\eqref{SpinCP} has two possible notions of time reversal symmetry~\eqref{TwoTs}. This is nicely reflected in the domain wall construction of this model.

%

In summary, we have found that for even $k$ the time reversal anomaly of the model~\eqref{SpinCP} is either 0 or 8 mod 16 while for odd $k$ it is either $4$ or $12$ mod 16, depending on how the time reversal symmetry acts on the Skyrmions.

Let us now contrast this with some conjectures about the infrared behavior of QCD$_3$.
Consider $SU(N)$ gauge theory with vanishing Chern-Simons level and $N_f$ fundamental fermions ($N_f$ must be even for consistency). We can pick a time reversal symmetry that commutes with the $U(N_f)$ flavor symmetry. This would  act on the two fermion flavors by $\Psi_{i}\to \gamma_0\Psi_{i}^\dagger$ for $i=1,..,N_f$. This commutes with $U(N_f)$ even though complex conjugation is involved because time reversal is an anti-unitary symmetry. The time reversal anomaly of this model is therefore given by counting the number of Majorana fermions in the ultraviolet and one finds $2N_fN$ mod 16. For $N_f=2$ we find $4N$ mod 16.
It was conjectured~\cite{Komargodski:2017keh} that the model flows at long distances to the sigma model~\eqref{SpinCP} with $k=N$. The ultraviolet anomaly is therefore in precise agreement with our result $4k$ mod 16 for the anomaly of the non-linear sigma model~\eqref{SpinCP}.
It would be nice to repeat this analysis for arbitrary $N_f$, match the other anomalies, and also identify more clearly the mapping of the symmetries.


\section{Dimensional Reduction and Cobordisms}\label{secobs}

In this section we will explore the mathematical description of the dimensional reduction picture in terms of the cobordism classification of SPT phases/'t Hooft anomalies. This will allow us to prove a number of interesting results. In particular we will show that for a $\bZ_2$ symmetry, the anomaly is always captured by the domain wall. On the other hand, for $\bZ_n$ symmetries, one has to study the gravitational response of the domain wall as well as the $\bZ_n$ anomaly on the junction, and then one can reconstruct the anomaly. This way, the anomaly of all finite abelian groups may be computed by dimensional reduction, focusing on one cyclic factor at a time. For mixed anomalies involving a finite abelian group and a Lie group $G$, the anomaly calculation can be dimensionally reduced to a pure $G$ anomaly. We expect that, in most cases, by restricting to different abelian subgroups, we can recover an arbitrary anomaly by dimensional reduction.

\subsection{Bordism and Cobordism Groups}

We are interested in the \emph{($\xi$-twisted) ($S$-)bordism groups} $\Omega_n^S(W,\xi)$, where $W$ is a space, $\xi \to W$ is a real vector bundle over $W$, and $S \to O$ is a kind of stable structure for real vector bundles, meaning that $S$ is a group and an $S$-structure for a bundle is a lift of its transition maps (valued in the subgroup $O(r) \subset O = O(\infty)$, where $r$ is the rank of the bundle) to $S$\footnote{Here stable means that an $S$-structure on a pair of bundle $V_1,V_2$ defines an $S$-structure on $V_1 \oplus V_2$.}. Usually $S = SO$ (orientation) or $S = {\rm Spin}$ (spin structure), but it can also be Spin$^c$ or involve other internal symmetries.

The group $\Omega_n^S(W,\xi)$ consists of equivalence classes $[X,f,s]$, where $X$ is an $n$-manifold, $f:X \to W$ is a map, and $s$ is an $S$-structre on $TX \oplus f^*\xi$ (sometimes called a $\xi$-twisted $S$-structure). These classes form a group by disjoint union with inverses given by orientation-reversal. We have
\begin{equation}[X,f,s] = 0\end{equation}
whenever there is an $(n+1)$-manifold $Z$ with $\partial Z = X$ and with extensions of $f$ and $s$. Such a manifold is called a ($\xi$-twisted $S$-)nullbordism. A nullbordism of $[X,f,s] - [X',f',s']$ is called a ($\xi$-twisted $S$-)bordism between them. Usually we supress $f$ and $s$ from the notation.

Meanwhile, the \emph{($\xi$-twisted) ($S$-)cobordism groups} are defined by Anderson duality \cite{freed2016reflection}. This implies there is a short exact sequence
\begin{equation}\label{andersondual}
{\rm Ext}(\Omega_n^S(W,\xi),\bZ) \to \Omega^n_S(W,\xi) \to {\rm Hom}(\Omega_{n+1}^S(W,\xi),\bZ)
\end{equation}
analogous to the universal coefficient sequence \cite{Hatcher}. This sequence splits, meaning
\begin{equation}\label{andersondualsplit}
\Omega^n_S(W,\xi) = {\rm Ext}(\Omega_n^S(W,\xi),\bZ) \oplus {\rm Hom}(\Omega_{n+1}^S(W,\xi),\bZ),
\end{equation}
but this splitting is non-canonical, meaning that if we have maps of bordism groups, we cannot necessarily use this splitting to compute the map on the cobordism group. We will see an important example of this below.

It has been argued that elements of $\Omega^n_S(W,\xi)$ may be identified with partition functions of invertible TQFTs for spacetime $n$-manifolds $X$ equipped with a map $f:X \to W$ with $\xi$-twisted $S$-structure \cite{kapustin2015fermionic,freed2016reflection}.

For $W = BG$, $S = SO$ (resp. $Spin$) these classify 't Hooft anomalies of bosonic (resp. fermionic) systems in $n-1$ spacetime dimensions with bosonic symmetry $G$. The first factor in \eqref{andersondualsplit} can be regarded as the torsion phases (global anomalies), while the second factor contains generalized gravitational Chern-Simons terms. In these cases, the bordism group only depends on
\begin{equation}w_1(\xi) \in H^1(BG,\bZ_2),\end{equation}
which represents a homomorphism $G \to \bZ_2$ which picks out the anti-unitary  elements, and
\begin{equation}w_2(\xi) \in H^2(BG,\bZ_2),\end{equation}
which classifies the extension
\begin{equation}\bZ_2^F \to G_{\rm tot} \to G,\end{equation}
where $\bZ_2^F$ is generated by the fermion parity operator. In general, the (co)bordism group is independent of any twist $\xi$ which itself admits $S$-structure.

\subsection{The Smith Maps and Anomaly Matching}\label{secsmithmaps}

Suppose we have a theory with $(G,\xi)$ symmetry in $D$ spacetime dimensions. Its anomaly is characterized by an element
\begin{equation}\alpha_D \in \Omega^{D+1}_S(BG,\xi).\end{equation}
Let $V$ be an $r$ dimensional representation of $G$. We introduce $r$ many real order parameters transforming according to $V$. There is a codimension-$r$ defect where along an $(r-1)$-sphere linking the defect the order parameters wrap the unit $(r-1)$-sphere in $V$. This defect may be endowed with a $G$ symmetry using CPT and rotations as we have discussed. Elsewhere, we assume the system is trivially gapped, so this defect has an anomaly
\begin{equation}\alpha_{D-r} \in \Omega^{D-r+1}_S(BG,\xi \oplus V),\end{equation}
where the modified twist comes about because $G$ acts on the normal bundle of the defect in the representation $V$. We will define a map
\begin{equation}f_V:\Omega^{n-r}_S(BG,\xi \oplus V) \to \Omega^n_S(BG,\xi)\end{equation}
such that if our theory may be trivially gapped away from the $V$-defect, then its anomaly satisfies
\begin{equation}f_V(\alpha_{D-r}) = \alpha_D,\end{equation}
and so the defect captures the anomaly. We believe the converse holds as well, compare Section \ref{secanommatch}. This is the case for $G = \bZ_2$ and $V = \sigma$ the one-dimensional sign representation in the absence of gravitational anomaly, which corresponds to the $\bZ_2$ domain wall we have discussed, and we will show this fact from the geometric point of view. In other cases the cokernel of $f_V$, which is the obstruction to trivially gapping the theory away from the $V$-defect, also involves 't Hooft anomalies. We will describe how this works for a general finite abelian group in Section \ref{seckernel}.

We will actually define and work mostly with the dual bordism map
\begin{equation}\Omega_{n}^S(BG,\xi \oplus V) \to \Omega_{n-r}^S(BG,\xi)\end{equation}
which describes the analogous reduction from $(X,A)$ to $(Y,A|_Y)$ in our physical anomaly-matching relation \eqref{eqngenanomatch}.

Suppose $X$ is a closed $n$-manifold endowed with a $G$ bundle, equivalently a map $A:X \to BG$. Let $V$ be an $r$ dimensional real representation of $G$. We can consider $V$ as a $\mathbb{R}^r$ bundle over $BG$. It has an Euler class,
\begin{equation}[e(V)] \in H^r(BG,\bZ^{\det(V)}),\end{equation}
where $\bZ^{\det(V)}$ denotes integer coefficients twisted by the determinant line of $V$. The pullback $A^*[e(V)] = [e(A^*V)]$ can actually be represented by a codimension $r$ submanifold in $X$ as follows.

First consider the pullback bundle $\pi:A^*V \to X$. We choose a smooth section of this bundle: $s:X \to A^*V$ such that $\pi \cdot s = id$. Locally, $s$ may be represented as an $r$-tuple of smooth real-valued functions $s = (s_1,\ldots,s_r)$. Thus we can modify $s$ locally near each zero so that zero is a regular value. Let us restrict our attention to sections with this property. Then the zero locus of $s$ is a codimension $r$ submanifold $E(s)$ of $X$.

If there are no zeros of $s$, then $s$ generates a trivial sub-bundle of $V$ and hence $[e(V)] = 0$. Moreover, if the zero locus of $s$ is the boundary of an $(r+1)$-chain, then we can modify $s$ near this $(r+1)$-chain so that it is non-vanishing. Hence by usual obstruction theory arguments $E(s)$ is a Poincar\'e dual representative of $[e(V)]$.

Furthermore, the bordism class of $E(s)$ only depends on the bundle $A^*V$. Indeed suppose $s'$ is another section of $A^*V$, regular at zero. Then we have a 1-parameter family of sections
\begin{equation}s(t) = t s' + (1-t)s\end{equation}
such that $s(0) = s, s(1) = s'$. We can consider this family to be a section over the extended bundle $A^*V$ on $X \times [0,1]_t$. We perturb this resulting section to be regular at zero, which we can do without modifying anything near $X \times 0$ or $X \times 1$ since the section is already regular at zero there and regularity is an open condition. It follows that the zero locus of this section is a $(n-r+1)$-manifold with boundary $E(s) \sqcup E(s')$, ie. a bordism between them.

Finally, it is clear that if we have a bordism of $(X,A)$ with $(X',A')$, that the two submanifolds so constructed are also bordant.

Thus we have constructed a map
\begin{equation}\Omega_n^O(BG) \to \Omega_{n-r}^O(BG).\end{equation}
If $X$ has tangent structure we can do even better. Indeed, observe that the normal bundle of $E(s)$ is equivalent to the restriction of $V$. Therefore, if $X$ is equipped with structure on $TX \oplus A^*\xi$, where $\xi$ is some vector bundle over $BG$, then $E(s)$ is equipped with that same structure on the restricted bundle,
\begin{equation}(TX \oplus A^*\xi)|_{E(s)} = TY \oplus (A|_{E(s)})^*(V\oplus \xi).\end{equation}
This is likewise true of the bordism above that we constructed, since it is also a zero locus of a section of $A^*V$, and may be assumed likewise of any bordism of $(X,A)$. Therefore, with such tangent structures we actually obtain a map
\begin{equation}\Omega_n^S(BG,\xi) \to \Omega_{n-r}^S(BG,\xi \oplus V),\end{equation}
where $S$ is some structure, such as an orientation, spin structure, or spin$^c$ structure (it could even be another gauge field). We call this the \emph{Smith map}, because it generalizes the Smith isomorphism described in \cite{gilkey1989geometry}. See also \cite{smith1960}. Taking duals (in the sense of Anderson duality \eqref{andersondual}), we further obtain the desired map
\begin{equation}f_V:\Omega^{n-r}_S(BG,\xi \oplus V) \to \Omega^n_S(BG,\xi),\end{equation}
which we also call the Smith map. Note that this map does not need to split according to \eqref{andersondual}.

We note that one can analogously define Smith maps for any space $W$ equipped with a rank $n$ vector bundle $V$,
\begin{equation}\Omega_D^S(W,\xi) \to \Omega_{D-n}^S(W,\xi \oplus V).\end{equation}

\subsection{Some Properties of the Smith Maps}\label{secsmithprops}

We collect here some elementary facts about the Smith maps, some of which are proven in \cite{gilkey1989geometry} for unitary representations for untwisted bordism.

Consider the map $\Omega_D^S \to \Omega_D^S(BG,\xi)$ given by endowing an $S$-manifold with the trivial $G$ bundle. The cokernel of this map is called the \emph{reduced bordism group} $\tilde \Omega_D^S(BG,\xi)$. This is dual to the \emph{reduced cobordism group} $\tilde \Omega^D_S(BG,\xi)$ which is the subgroup of cobordism invariants which are trivial if the $G$ bundle is trivial. Clearly the image of all cobordism Smith maps are reduced cobordism invariants and the bordism Smith maps descend to reduced bordism classes. We refer to
\begin{equation}\tilde\Omega_D^S(BG,\xi) \to \Omega_{D-n}^S(BG,\xi \oplus V)\end{equation}
and its dual
\begin{equation}\Omega^{D-n}_S(BG,\xi \oplus V) \to \tilde\Omega^D_S(BG,\xi)\end{equation}
as the \emph{reduced Smith maps}. We note that if $\xi = 0$, then the inclusion of bordism groups above splits, since given an $S$-manifold with a $G$ bundle we can forget the $G$ bundle. It follows
\begin{equation}\label{reducedsplit}\
  \Omega_D^S(BG) = \Omega_D^S \oplus \tilde \Omega_D^S(BG).
\end{equation}
This lets us interpret the elements of $\tilde \Omega_D^S(BG)$ as those $S$-manifolds with $G$-bundles which are $S$-nullbordant after forgetting the $G$-bundle.

If we have $V = V_1 \oplus V_2$, then we can decompose the Smith map of $V$ into a composition of the Smith maps of $V_1$ and $V_2$. This is because a section of $V$ is a section $s_1$ of $V_1$ plus a section $s_2$ of $V_2$, so its vanishing locus may be taken to be the intersection of the vanishing loci, or equivalently we take the vanishing locus of $s_1$ and consider $s_2$ as a section of $V_2$ restricted to it and then take the vanishing locus of the restricted section, or vice versa. This shows that all Smith maps commute up to signs $(-1)^{n_1 n_2}$ where $n_j = \dim V_j$ from the intersection count. For Euler classes, this means
\begin{equation}\label{eqneulerprod}
  e(V_1 \oplus V_2) = e(V_1) \cup e(V_2).
\end{equation}

Furthermore, the sum of all $S$-bordism groups
\begin{equation}\Omega_*^S = \bigoplus_n \Omega_n^S\end{equation}
forms a ring under Cartesian product of manifolds. For any $(G,\xi)$ then, the sum of $G$-equivariant $\xi$-twisted $S$-bordism groups
\begin{equation}\Omega_*^S(BG,\xi) = \bigoplus_n \Omega_n^S(BG,\xi)\end{equation}
forms an $\Omega_*^S$-module. It is easy to see that the Smith maps are module homomorphisms, that is they commute with the action of $\Omega_*^S$.

Group cohomology classes provide cobordism invariants by integration, giving a map for finite groups
\begin{equation}H^n(BG,U(1)^{\det \xi}) \to \Omega^n_S(BG,\xi)\end{equation}
for any $S$, landing in the torsion subgroup of the cobordism group. Using Anderson duality applied to $H_*(BG,\bZ)$, we get more generally a map
\begin{equation}H^{n+1}(BG,\bZ^{\det \xi}) \to \Omega^n_S(BG,\xi)\end{equation}
which may include non-torsion pieces such as Chern-Simons terms for Lie groups $G$. If we have a rank $r$ representation $V$, cup product with the Euler class defines a map
\begin{equation}H^{n+1}(BG,\bZ^{\det \xi}) \to H^{n+r+1}(BG,\bZ^{\det(\xi \oplus V)})\end{equation}
  \begin{equation}\label{eqnsmithgroupco}
    \omega \mapsto \omega \cup [e(V)],
  \end{equation}
which commutes with the above map to the Smith homomorphism of $V$. That is, in group cohomology terms, the Smith map is just given by cup product with the Euler class. For a related discussion in the context of crystalline symmetries, including a computation of Euler classes for all 2d point groups and 3d axial point groups, see \cite{else2019topological}.

\subsection{$\bZ_2$ Smith Maps and the 4-Periodic Hierarchy}\label{secZ2smith}

Let us consider the case $G = \bZ_2$, which relates directly to our anomaly discussions above. All representations of $\bZ_2$ are sums of the trivial and the sign representation $\sigma$. The trivial bundles may be split off from $\xi$ without affecting anything, since we only study stable tangent structure. Thus, $\xi = m \sigma$ for some $m \ge 0$. The Smith maps go
\begin{equation}\Omega^{D-1}_S(B\bZ_2,(m+1)\sigma) \to \Omega^D_S(B\bZ_2,m\sigma).\end{equation}
All the others are compositions of these.

If we are interested in bosonic anomalies, then we use oriented cobordism with $S = SO$. In this case, our manifolds have an orientation on $TX \oplus A^*m\sigma$. If $m$ is even, then $m\sigma$ is orientable over $B\bZ_2$, so this is equivalent to an orientation of $TX$. Therefore, there is a mod 2 periodicity of the Smith maps, as we have observed:
\begin{equation}\Omega^{D-1}_O \to \Omega^D_{SO}(B\bZ_2)\end{equation}
\begin{equation}\Omega^{D-1}_{SO}(B\bZ_2) \to \Omega^D_{O}.\end{equation}

If we are interested in fermionic anomalies, then we use spin cobordism with $S = {\rm Spin}$. Now our manifolds have spin structures on $TX \oplus A^*m\sigma$. It turns out that $m\sigma$ admits a spin structure over $B\bZ_2$ if $m = 0$ mod 4. Thus, there is a mod 4 periodicity of the Smith maps, corresponding to \eqref{chain}:
\begin{equation}
  \Omega^D_{\rm Spin}(B\bZ_2) \leftarrow \Omega^{D-1}_{\rm Pin^{-}} \leftarrow \Omega^{D-2}_{{\rm Spin}^{c/2}} \leftarrow \Omega^{D-3}_{\rm Pin^{+}} \leftarrow \Omega^{D-4}_{\rm Spin}(B\bZ_2)
\end{equation}
where ${\rm Spin}^{c/2}$ denotes Spin$^c$ structure where the $U(1)$ gauge field has holonomies only in the subgroup $\bZ_2 < U(1)$. This is the appropriate structure for unitary symmetries $U$ with $U^2 = (-1)^F$.

\begin{thm}\label{thmsmith} {\bf Classical Smith Isomorphism}
For $G = \bZ_2$, $\xi = 0$, $V = \sigma$, but for any $n$ and structure $S$, the reduced Smith map
\begin{equation}\tilde \Omega_n^S(B\bZ_2) \to \Omega_{n-1}^S(B\bZ_2,\sigma)\end{equation}
is an isomorphism. More generally,
\begin{equation}\tilde \Omega_n^S(B\bZ_2,m\sigma) \to \Omega_{n-1}^S(B\bZ_2,(m+1)\sigma)\end{equation}
is injective (but not always surjective). Equivalently, the following sequence is exact
\begin{equation}
  \Omega_n^S \to  \Omega_n^S(B\bZ_2,m\sigma) \to \Omega_{n-1}^S(B\bZ_2,(m+1)\sigma),
\end{equation}
where the first map is taking an $S$-manifold and considering it as a $\bZ_2$-twisted $S$-manifold with trivial $\bZ_2$ bundle.
\end{thm}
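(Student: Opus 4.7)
My plan is to split the statement into (a) the exactness of the sequence for all $m \ge 0$, which is equivalent to injectivity of the reduced Smith map, and (b) the extra surjectivity needed when $m = 0$, and to prove each by an explicit geometric construction built from the disk bundle of $A^*\sigma$.

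For exactness, one direction is immediate: a trivial $\bZ_2$-bundle admits a nowhere-zero section of $A^*\sigma$, so its Smith image is the empty manifold. For the other direction, starting from $[X,A]$ with $\mathrm{Smith}([X,A]) = 0$, witnessed by a null-bordism $W$ of $Y = E(s)$ with $\bZ_2$-bundle $A_W$ extending $A|_Y$ and $S$-structure on $TW \oplus A_W^*((m+1)\sigma)$, I plan to produce an explicit bordism from $(X,A)$ to a manifold carrying the trivial $\bZ_2$-bundle. Writing $L = A^*\sigma$, the disk bundle $D := D(A_W^*\sigma)$ is an $(n+1)$-manifold satisfying $TD \oplus A_D^*(m\sigma) \cong \pi^*(TW \oplus A_W^*((m+1)\sigma))$, so it inherits an $S$-structure; its boundary decomposes canonically as $\nu(Y) \cup_{S(L|_Y)} S(A_W^*\sigma)$ with $\nu(Y) = D(L|_Y)$ the tubular neighborhood of $Y$ in $X$. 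I glue the $\nu(Y)$ summand to $\nu(Y) \times \{1\} \subset X \times \{1\}$ to form
\begin{equation}
  B := (X \times [0,1]) \cup_{\nu(Y) \times \{1\}} D,
\end{equation}
an $(n+1)$-bordism with $\partial B = (X \times \{0\}) \sqcup \hat X$, where $\hat X = (X \setminus \mathrm{int}\,\nu(Y)) \cup_{S(L|_Y)} S(A_W^*\sigma)$. On $\hat X$ the line bundle $A^*\sigma$ is globally trivialized, by $s$ on $X \setminus \nu(Y)$ and by the tautological unit-vector section on $S(A_W^*\sigma)$; these agree on the common locus $S(L|_Y)$ because, by genericity of $s$, its restriction to $\partial \nu(Y)$ is (up to positive rescaling) the tautological radial section. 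Thus $[\hat X]$ lies in the image of $\Omega_n^S$ and $B$ exhibits $[X,A] = [\hat X]$ in $\Omega_n^S(B\bZ_2, m\sigma)$.

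For surjectivity when $m = 0$, given $(Y, L_Y)$ I plan to take $X := D(L_Y) \cup_{\tilde Y} D(L_Y)$, the double of the disk bundle, or equivalently the sphere bundle $S(L_Y \oplus \bR)$. A stable identification $TX \cong \pi^*(TY \oplus L_Y)$ transports the assumed $S$-structure on $TY \oplus L_Y$ to an $S$-structure on $TX$. I equip one half with the pulled-back bundle $\pi^*L_Y$ and the other with the trivial bundle, glued along $\tilde Y$ via the canonical trivialization of $\pi^*L_Y|_{\tilde Y}$ coming from the tautological section; patching the tautological section on the nontrivial half with the constant $1$ on the trivial half yields a global section whose zero locus is exactly the zero section $Y$ of the nontrivial $D(L_Y)$, and so $\mathrm{Smith}([X,L_X]) = [Y,L_Y]$.

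The hard part will be the bookkeeping of $S$-structures through the various gluings. In the exactness construction this reduces to the compatibility---built into the definition of a bordism---between the $S$-structure on $TX \oplus A^*(m\sigma)$ near $Y$ and the $S$-structure on $TW \oplus A_W^*((m+1)\sigma)$ near $\partial W = Y$; with this in hand, the product structure on $X \times [0,1]$ and the pulled-back structure on $D$ match along the collar of $S(L|_Y)$, and the induced structure on $\hat X$ is globally defined. A similar but simpler check is needed in the surjectivity construction, where both halves of $X$ naturally carry the same pulled-back $S$-structure from $TY \oplus L_Y$.
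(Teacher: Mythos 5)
Your proposal is correct and follows essentially the same route as the paper's proof: the surjectivity step via the sphere bundle $S(L_Y\oplus\bR)$ (the double of the disk bundle) is the paper's fiberwise one-point compactification of $A^*\sigma$, and the injectivity step via gluing the disk bundle $D(A_W^*\sigma)$ over the nullbordism onto $X\times[0,1]$ along a tubular neighborhood of $Y$ is exactly the paper's construction of the bordism from $(X,A)$ to a manifold with trivial $\bZ_2$ bundle. The only difference is presentational — you make the trivializations by explicit sections and the stable isomorphism $TD\oplus A_D^*(m\sigma)\cong\pi^*(TW\oplus A_W^*((m+1)\sigma))$ more explicit than the paper does.
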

\begin{proof}
Let's first prove the first Smith map is surjective. Let us begin with a $(n-1)$-manifold $X$ with $\sigma$-twisted $S$-structure, ie. a $\bZ_2$ bundle $A$ and $S$-structure on $TX \oplus A^*\sigma$, where $A^*\sigma$ is a real line bundle. The tangent space of this line bundle is again $TX \oplus A^*\sigma$, where $A$ is pulled back to the total space. We thus obtain an $S$-structure on this open $n$-manifold.

We consider the sphere bundle $Z = {\rm Sph}(A^*\sigma)$ obtained by taking the fiber-wise one-point compactification of $A^*\sigma$. This is a compact $n$-manifold with (untwisted) $S$-structure. Further, the zero section, an embedded copy of $X$, is Poincar\'e dual to a $\bZ_2$ bundle $\hat A$ on $Z$. Applying the Smith map to $(Z,\hat A)$ we obtain our original manifold with its twisted $S$-structure.

Now let's prove injectivity in the general twisted case. Suppose $X$ is an $n$-manifold with $\bZ_2$ bundle $A$ and $mA^*\sigma$-twisted $S$-structure which is zero under the Smith map. To be precise, we choose a section $s$ of $A^*\sigma$ which is regular at zero so that its zero locus $Y = Y(s,A^*\sigma)$ is an $(n-1)$-submanifold. $Y$ inherits an $(m+1)A^*\sigma$-twisted $S$-structure from $X$.

Since $X$ goes to zero under the Smith map, there is an $n$-manifold $Z$ with $\partial Z = Y$ and to which this twisted $S$-structure extends, with $\bZ_2$ bundle $\hat A$. We consider the unit interval bundle inside $\hat A^*\sigma \to Z$. Call this $\tilde Z$. This is an $(n+1)$-manifold with an inclusion
\begin{equation}\tilde Y \hookrightarrow \partial \tilde Z,\end{equation}
where $\tilde Y$ is a tubular neighborhood of $Y$. Thus we can glue $\tilde Z$ to $X \times [0,1]$ along $\tilde Y \times 1$ to obtain a smooth $n+1$-manifold $W$ one of whose boundary components is $X \hookrightarrow X \times 0$. We denote the union of the other boundary components as $X'$.

Furthermore, $Z \cup (Y \times [0,1]) \subset W$ is Poincar\'e dual to a $\bZ_2$ bundle $\tilde A$ over $W$ which restricts to $A$ on $X$ and to nothing on $X'$. By construction, $W$ has an $m\tilde A^*\sigma$-twisted $S$-structure extending that of $X$. Thus, $(X,A)$ is bordant to $(X',0)$ in $\Omega_n^S(BG,m\sigma)$. This means it's zero in reduced bordism, since it is in the image of $\Omega_n^S$. Thus, the map is injective.
\end{proof}

\begin{figure}
  \begin{centering}
  \includegraphics[width=5cm]{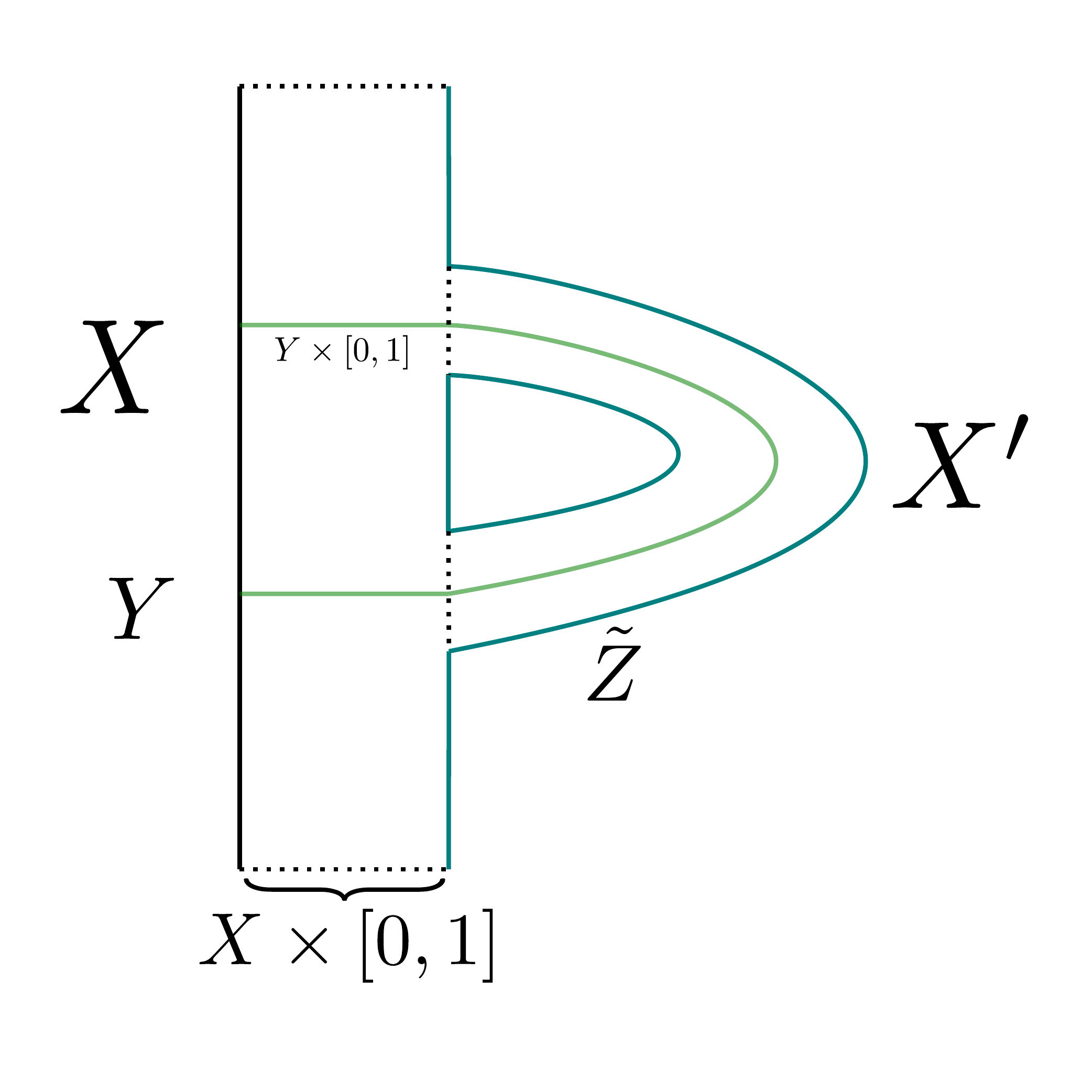}
  \caption{The injectivity proof for the Smith isomorphism theorem in a nutshell: one uses a nullbordism $Z$ of the image of the Smith map $(X,A) \mapsto (Y,A|_Y)$ to construct a bordism $(W,\tilde A)$, depicted above, of $(X,A)$ to $(X',0)$ (blue), with the latter carrying no $\bZ_2$ bundle. Indeed the green curve Poincar\'e dual to $\tilde A$ only meets the boundary of the bordism along $X$.}
  \label{}
\end{centering}
\end{figure}

The injectivity of the reduced (bordism) Smith map is dual to a surjectivity of
\[\Omega^{n-1}_S(B\bZ_2,(m+1)\sigma) \to \tilde\Omega^n_S(B\bZ_2,m\sigma),\]
which means that in the absence of ``gravitational" anomalies contained in $\Omega^n_S$, we can trivially gap the theory away from the domain wall and reduce the calculation to a calucation of the anomaly of the induced symmetry on the wall, as we anticipated in Section \ref{secanommatch}.

Here is an important example where the bordism Smith map is not surjective: $\Omega_5^{\rm Pin^{+}} = 0$ but $\Omega_4^{\rm Spin} = \bZ$. The problem is that we cannot use our surjectivity trick above because the generator of $\Omega_4^{\rm Spin}$, the K3 surface, is simply connected. Thus, if we attempt to construct a 5-manifold as above with zero section as the K3 surface, our only choice is $K3 \times S^1$, with nontrivial $\bZ_2$ bundle around $S^1$. However, we cannot freely choose the $\bZ_2$ bundle for a Pin$^{+}$ structure; it has to be the orientation line, but $K3 \times S^1$ is orientable, so its orientation line is trivial, a contradiction.

This is Anderson dual to a failure of injectivity for
\begin{equation}\label{eqnsmithnu3}
\bZ \oplus \bZ_8 = \Omega^3_{\rm Spin}(B\bZ_2) \to \Omega^4_{\rm Pin^{+}} = \bZ_{16},
\end{equation}
which we saw in an anomaly context in Section \ref{secmajorana}. This is also an example that shows the splitting \eqref{andersondualsplit} is non-canonical, since by duality this map must be surjective, while if it commuted with the splitting then the $\bZ$ factor would be sent to zero. We do not know a mathematical way to compute this map, but because it is linear we were able to check some well-chosen examples in Section \ref{secmajorana} and we found it is
\begin{equation}(k,\nu_2) \mapsto 2\nu_2 - k,\end{equation}
cf. \eqref{eqnanomrel}. The noninjectivity in \eqref{eqnsmithnu3} and \eqref{eqnanomrel} meant that the anomaly on the domain wall was not determined by the bulk anomaly but on the other hand it did determine the bulk anomaly because we still had surjectivity.

As an example, let us study the hierarchy beginning with $\Omega^0_{\rm Pin^{+}} = \bZ_2$. We have, up to $D = 4$,
\[
\begin{tikzcd}
\bZ_2 \arrow[r,hook] & \bZ_4 \arrow[r,hook] \arrow[d]  &  \bZ_8 \arrow[r,hook] \arrow[d] &  \bZ_8 \oplus \bZ \arrow[r, two heads] \arrow[d] & \bZ_{16}\arrow[d] \\
  & \bZ_2 & \bZ_2 & \bZ & 0
\end{tikzcd}
\]
where we have listed the cokernels in the second row, which are the gravitational responses, by the theorem. As we have discussed, the last Smith map is not injective, and requires a computation. From $D = 4$ on we have,
\[
\begin{tikzcd}
 \bZ_{16}\arrow[r, hook, two heads] & \bZ_{16} \arrow[d] \arrow[r, hook, two heads] & \bZ_{16} \arrow[d] \arrow[r, hook] & \bZ_{16} \oplus \bZ^2 \arrow[d] \arrow[r, two heads] & \bZ_{32} \oplus \bZ_{2} \arrow[d] \arrow[r,hook] & \bZ_{64} \oplus \bZ_{4} \arrow[d] \arrow[r,hook] & \cdots \\
  & 0 & 0 & \bZ^2 & 0 & \bZ_2^2
\end{tikzcd}
\]
\[
\begin{tikzcd}
  \cdots \arrow[r,hook] & \bZ_{128} \oplus \bZ_8 \oplus \bZ_2 \arrow[d] \arrow[r,hook, two heads] &  \bZ_{128} \oplus \bZ_8 \oplus \bZ_2 \oplus \bZ^3 \arrow[d] \arrow[r,two heads] & ??? \arrow[d]\\
   & \bZ_2^3 & \bZ^3 & 0
\end{tikzcd}
\]
which extends (and corrects) the table of \cite{kapustin2015fermionic}. See \cite{bunke2009secondary} for an extensive computation of the gravitational components. Note that where the two 7D and three 11D gravitational Chern-Simons terms appear there is again a kernel of the Smith map, requiring an extra computation, analogous to what we did in Section \ref{secmajorana}. However, we do know that these maps are surjective, since their cokernels are trivial by the theorem. Note also that beginning in $D = 8$ we have another tower analogous to the tower beginning in $D = 0$ above which reflects the ring structure of the bordism group\footnote{Our Anderson dual cobordism groups, hence the group of SPT phases, however do not form a ring, because of some degree shifts, or more precisely because $U(1)$ is not a ring.}: these $8+k$ dimensional phases are detected by spacetimes of the form $X_8 \times Y_k$ where $X_8$ is a generator of $\Omega_8^{\rm spin}$ and $Y_k$ is a test spacetime for a $k$ dimensional phase.

We summarize our calculations using the theorem in the following table\footnote{We would like to thank Miguel Montero for pointing out an error in the $U^2 = (-1)^F$ symmetry class in the first version of this table. It is now consistent with the appendix of \cite{Garc_a_Etxebarria_2019}.}. In each row we first list the total group of phases and then below it the cokernel of the incoming Smith map (on the far left is $\Omega^D_{\rm spin}$, which is the biggest possible cokernel). The Smith maps go down and to the left. Most of these groups can be computed just using the theorem and some low dimensional starting points, see e.g. \cite{Kirby1991PinSO}. To compute the other cokernels and check the results we used Atiyah-Hirzebruch spectral sequence techniques, see \cite{thorngren2018anomalies} for a review.
%

\begin{center}
 \begin{tabular}{||c | c c c c||}
 \hline
 $D$ & $U^2 = 1$ & $T^2 = 1$ & $U^2 = (-1)^F$ & $T^2 = (-1)^F$ \\ [0.5ex]
 \hline\hline
 1 & $\bZ_2 \oplus \bZ_2$ & $\bZ_2$ & $\bZ_4$ & 0 \\
  $\bZ_2$ & $\bZ_2$ & $\bZ_2$ & $\bZ_2$ & 0 \\ [1ex]
 \hline
 2 & $\bZ_2 \oplus \bZ_2$ & $\bZ_8$ & 0 &  $\bZ_2$ \\
$\bZ_2$   & $\bZ_2$ & $\bZ_2$ & 0 &  $\bZ_2$ \\ [1ex]
 \hline
 3 & $\bZ_8 \oplus \bZ$ & 0 & $2\bZ$ &  $\bZ_2$ \\
 $\bZ$  & $\bZ$ & 0 & $2\bZ$ &  0 \\ [1ex]
 \hline
 4 & 0 & 0 & 0 & $\bZ_{16}$ \\
 0  & 0 & 0 & 0 & 0 \\ [1ex]
 \hline
 5 & 0 & 0 & $\bZ_{16}$ & 0 \\
 0   & 0 & 0 & 0 & 0 \\ [1ex]
 \hline
 6 & 0 & $\bZ_{16}$ & 0 & 0\\
 0  & 0 & 0 & 0 & 0 \\ [1ex]
 \hline
\end{tabular}
\end{center}

Observe how typically the gravitational terms might not be consistent with the symmetry. The entries where the group of phases equals the cokernel are pure gravitational, and the incoming Smith map is zero. For some reason, the hierarchy we discussed in detail above is more interesting than the others---it is also the one where the unitary $U^2 = 1$ symmetry (untwisted spin cobordism) always lines up with the gravitational Chern-Simons terms, so these always give nontrivial phases in this hierarchy. If one had a good understanding of the kernel of the cobordism Smith map, then one could easily compute all of the $\bZ_2$ SPT classification, without resorting to spectral sequence techniques.


\subsection{The Smith Maps for Abelian Groups}\label{seckernel}

In this section we would like to prove some results for general symmetry groups $G$ and their domain walls and junctions, as well as give some more physical interpretations. The most physically interesting results to derive are statements about the kernel of the (bordism) Smith maps, since this kernel is dual to the cokernel of phases or anomalies which are not detected by the domain wall or junction. Our key result is the following:

\begin{thm}\label{thmcyclic}
  Let $G = \bZ_n$, $\xi$ an arbitary representation, and $V = V_2$ the two-dimensional real representation given by a $2\pi/n$ rotation of the plane. The following sequence is exact
  \begin{equation}\tilde \Omega_n^S(B\bZ,\xi) \to \tilde \Omega_n^S(B\bZ_n,\xi) \to \Omega_{n-2}^S(B\bZ_n,\xi \oplus V_2),\end{equation}
  where the first map takes a $\bZ$ gauge field and forms its quotient to give a $\bZ_n$ gauge field, and the second map is the Smith map based on $V_2$.
\end{thm}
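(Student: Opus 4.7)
\bigskip

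\noindent\textbf{Proof plan for Theorem \ref{thmcyclic}.} (Here and below I write $D$ for the dimension so as not to collide with the order $n$ of $\bZ_n$.) The geometric heart of the argument is the identification of the sphere bundle of $V_2$. First I would note that the unit circle bundle $S(V_2) \to B\bZ_n$ is a fibration whose homotopy long exact sequence gives a central extension
\[
0 \to \bZ = \pi_1(S^1) \to \pi_1 S(V_2) \to \pi_1 B\bZ_n = \bZ_n \to 0,
\]
with higher $\pi_i$ vanishing. The class of this extension is the Euler class $e(V_2) \in H^2(B\bZ_n,\bZ) = \bZ_n$, which is a generator, so $\pi_1 S(V_2) = \bZ$ and the covering $S(V_2) \to B\bZ_n$ is (up to homotopy) the map $B\bZ \to B\bZ_n$ induced by $\bZ \twoheadrightarrow \bZ_n$. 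Thus, ``lifting the classifying map $A: X \to B\bZ_n$ to $B\bZ$'' is equivalent to ``providing a nowhere-vanishing section of $A^*V_2$'', which is exactly what the Smith construction measures.

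With this in hand, the easy inclusion $\mathrm{Im} \subseteq \ker$ is direct: if $A$ factors through $B\bZ$, then $A^*V_2$ is the pullback of a rank-$2$ oriented bundle over $B\bZ \simeq S^1$, which is trivial, so $A^*V_2$ has a nowhere-vanishing section and its Smith image is zero. For the harder direction $\ker \subseteq \mathrm{Im}$, suppose $(X,A)$ has zero image under the Smith map. Pick a generic section $s$ of $A^*V_2$ with zero locus $Y$ a codimension-$2$ submanifold, and let $Z$ be a $(\xi\oplus V_2)$-twisted nullbordism of $Y$ with classifying map $B: Z \to B\bZ_n$ extending $A|_Y$. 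I would build an explicit bordism $W$ between $(X,A)$ and a new closed manifold $(X',A')$ whose classifying map factors through $B\bZ$, by setting
\[
W = \bigl(X \times [0,1]\bigr) \cup_{D(V_2)|_Y} D(V_2)|_Z,
\]
where the tubular neighborhood $D(V_2)|_Y \subset X \times \{1\}$ is identified with the corresponding portion of $\partial D(V_2)|_Z = D(V_2)|_Y \cup_{S(V_2)|_Y} S(V_2)|_Z$. After smoothing corners, the other boundary component is
\[
X' = \bigl(X \setminus \mathrm{int}\, D(V_2)|_Y\bigr) \cup_{S(V_2)|_Y} S(V_2)|_Z.
\]
The manifold $X'$ carries a natural $\bZ$-bundle: on the first piece it is given by $s/|s|:X \setminus Y \to S(V_2)$, and on the second piece by $S(V_2)|_Z \to S(V_2)$, both composed with the homotopy equivalence $S(V_2) \simeq B\bZ$; the two lifts agree on the common boundary $S(V_2)|_Y$ by construction, and reduce mod $n$ to the $\bZ_n$-bundle glued from $A|_{X \setminus Y}$ and $B|_Z$. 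Thus $[X',A'] \in \mathrm{Im}\bigl(\tilde\Omega_D^S(B\bZ,\xi) \to \tilde\Omega_D^S(B\bZ_n,\xi)\bigr)$.

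The main technical obstacle is verifying that $W$ is actually a valid $\xi$-twisted $S$-bordism with a consistent $\bZ_n$-bundle. The $\bZ_n$-bundle on $W$ is built from $A \times \mathrm{id}$ on $X \times [0,1]$ and the pullback of $B$ via the projection $D(V_2)|_Z \to Z$, and these agree on the gluing locus by the compatibility $B|_Y = A|_Y$. For the tangent structure, along the gluing one has $T(D(V_2)|_Z)|_Y = TY \oplus \nu_Y Z \oplus V_2|_Y$, matching $TX \oplus \bR|_Y = TY \oplus V_2|_Y \oplus \bR$ in the cylinder, and the $(V_2 \oplus \xi)$-twist on $Z$ precisely cancels the fiber $V_2$ of the disk bundle so that $D(V_2)|_Z$ inherits the desired $\xi$-twisted $S$-structure. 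Working these identifications out carefully---and arranging that the section $s$ points radially outward across the tube so that $s/|s|$ extends the $B\bZ$-lift continuously over $S(V_2)|_Y$---is the only real labor; once done, the sequence is exact. This argument is a direct generalization of the injectivity half of the proof of Theorem \ref{thmsmith}, with the single point $\{0\} \subset \sigma$ replaced by the circle $S(V_2) \subset V_2$, and it is this difference that accounts for the kernel being nontrivial and equal precisely to the image of the $\bZ$-bundle classes.
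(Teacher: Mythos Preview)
Your proposal is correct and follows essentially the same approach as the paper: both arguments adapt the gluing construction from Theorem~\ref{thmsmith}, attaching the disk bundle $D(V_2)|_Z$ over a nullbordism $Z$ of $Y$ to $X \times [0,1]$ to obtain a bordism from $(X,A)$ to some $(X',A')$ whose $\bZ_n$-bundle lifts to $\bZ$. The only notable difference is in how the lift of $A'$ is produced: the paper observes that after the construction $e(A'^*V_2) = 0$ and invokes the fact that this Euler class coincides with the Bockstein of $A'$, so vanishing of the Bockstein yields the integral lift abstractly; you instead construct the lift by hand via the homotopy equivalence $S(V_2) \simeq B\bZ$, using $s/|s|$ on $X \setminus D(V_2)|_Y$ and the sphere-bundle projection on $S(V_2)|_Z$. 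Your identification of $S(V_2)$ with $B\bZ$ via the long exact sequence is a nice concrete explanation of why the kernel is exactly the image from $\bZ$-bundles, and your explicit verification of the $\bZ_n$-bundle and tangent structure on $W$ fills in details the paper leaves to ``follow the construction.''
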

\begin{proof}
  First we must show the image of the first map is in the kernel of the second map. This is because the unit circle bundle inside $A^*V_2$ has Chern class $e(A^*V_2)$, which is zero if $A$ lifts to a $\bZ$ gauge field, hence $V_2$ admits a nonvanishing section. (Note for higher dimensional bundles having a vanishing Euler class is not sufficient to guarantee a nonvanishing section.)

  Conversely, we begin with an $(X,A)$ in the kernel of the second map and we want to show it is in the image of the first map. We follow the construction in the proof of Theorem \ref{thmsmith}, which gives us a bordism from $(X,A)$ to $(X',A')$ where $e(A'^*V_2) = 0$. This means $A'$ has a lift to a $\bZ$ bundle since $e(A'^*V_2)$ is the Bockstein of $A'$. Thus, $(X',A')$ and hence $(X,A)$ is in the image of the first map.
\end{proof}

This kernel is dual to a cokernel of phases (or anomalies) which are not distinguished by the symmetric junction. These phases are determined by their image in $\tilde \Omega^n_S(B\bZ,\xi)$. This group actually has a simple characterization. These are the terms which are linear in $A$, in the sense that
\begin{equation}\label{eqnahssidentity}
  \tilde \Omega^n_S(B\bZ,\xi) = H^1(B\bZ,\Omega^{n-1}_S),
\end{equation}
which may be derived from the Atiyah-Hirzebruch spectral sequence. Intuitively, such a linear term implies that the domain wall carries a fermionic phase which is nontrivial without any symmetry.

For example, with $G = \bZ_2$, we have $2+1$ dimensional phases representing a nontrivial element of
\begin{equation}H^1(B\bZ_2,\Omega^2_{\rm spin}) = \bZ_2,\end{equation}
characterized by the domain wall carrying a Kitaev wire. One might worry about this leading to a gravitational anomaly on the junction but actually it works out because the $\bZ_2$ junction has two domain walls coming in. The $\bZ_2$ junction gives us the map
\begin{equation}\bZ_4 = \Omega^1_{\rm spin}(B\bZ_2,\sigma \oplus \sigma) \to \Omega^3_{\rm spin}(B\bZ_2) = \bZ_8,\end{equation}
and we see this is consistent with the $\bZ_2$ cokernel above. Note that since $V_2 = \sigma \oplus \sigma$ for  $\bZ_2$, this map is the composition of the two Smith maps for $\sigma$:
\begin{equation}\Omega^1_{\rm spin}(B\bZ_2,\sigma \oplus \sigma) \to \Omega^2_{\rm spin}(B\bZ_2,\sigma) \to \Omega^3_{\rm spin}(B\bZ_2)\end{equation}
\begin{equation}\bZ_4 \to \bZ_8 \to \bZ_8 \oplus \bZ.\end{equation}
As expected from Theorem \ref{thmsmith}, the second is an isomorphism onto the reduced piece while the first has cokernel $\Omega^2_{\rm spin} = \bZ_2$.

Another interesting example is with $n = 4k$, we have
\begin{equation}\Omega^3_{\rm Spin}(B\bZ_n) = \bZ_{2n} \oplus \bZ_2.\end{equation}
The first piece describes Gu-Wen-Freed phases \cite{GuWen,freed2008}, while the second piece is the phase with the Kitaev wire on the domain wall. We can detect this wire using the one-dimensional representation $\sigma$, we find the Smith map is
\begin{equation}\Omega^2_{\rm Spin}(B\bZ_n,\sigma) = \bZ_4 \oplus \bZ_2 \to \Omega^3_{\rm Spin}(B\bZ_n),\end{equation}
with the first factor surjecting onto the $\bZ_2$ factor of $\Omega^3_{\rm Spin}(B\bZ_n)$ and the second going to zero. Meanwhile, under the $V = V_2$ Smith map,
\begin{equation}\Omega^1_{\rm Spin}(B\bZ_n,V_2) = \bZ_{2n} \to \Omega^3_{\rm Spin}(B\bZ_n)\end{equation}
surjects onto the $\bZ_{2n}$ piece, leaving the cokernel $\bZ_2$ as before.

Further, for $n$ odd, $H^1(B\bZ_2, \Omega^{n-1}_S) = 0$ for $S = Spin$ or $SO$, since for these $\Omega^{n-1}_S$ is a product of $\bZ_2$'s and possibly $\bZ$'s in dimensions $4k-1$ from gravitational Chern-Simons terms \cite{bunke2009secondary}. Thus for $n$ odd, the Smith homomorphism relevant for $\bZ_n$ anomalies based on the two-dimensional representation is injective, and the $\bZ_n$ anomaly is uniquely determined by the junction.

Finally we note that because the theorem is proved for arbitary structure $S$, we can apply it to product groups $G = \bZ_n \times H$, where the $H$ gauge field and possible $H$-twisted spin structure or orientation are considered part of the tangent structure $S$. We can thus bootstrap these theorems to results about any finite abelian group, for which we find the anomaly is characterized by splitting $G = \bZ_n \times H$ for each cyclic factor $\bZ_n$, looking at the $H$ anomaly on the $\bZ_n$ domain wall, and looking at the $G$ anomaly on the codimension-2 $\bZ_n$ junction.

\section{Discussion}

Other dimensional hierarchies of topological phases have been considered before, e.g. in free systems in \cite{ryu2010topological}. In \cite{queiroz2016dimensional}, the authors described a dimensional reduction procedure which applies to free fermion phases modified by strong interactions. While apparently quite similar to our example in Section \ref{secmajorana} the overall picture of phases which appeared in their ``Bott spiral", which includes all the symmetry classes in the 10-fold way, has a very much more regular structure than what we found in Section \ref{secZ2smith}. It would be very interesting to relate the two pictures and lead to a better understanding of interacting SPT phases.

There is an apparent similarity between our dimensional reduction procedures and the decorated domain wall methods \cite{Chen_2014,Tarantino_2016,Ware_2016,natdecorated}. One should consider dimensional reduction as a probe of a phase or anomaly, while decorated domain walls are a method to construct them. They have to be consistent. For instance, if one gives a decorated domain wall construction of a $2+1$ dimensional SPT phase by placing a Kitaev wire on the domain wall, then this phase is the image of the Smith map of the sign representation applied to the $1+1$ dimensional Kitaev phase.

However, in general there are many possible domain wall decorations not described by Smith maps and they must satisfy some complicated consistency relations, not all of which are known (see \cite{thorngren2018anomalies} for a review). These form the differentials of the Atiyah-Hirzebruch spectral sequence (AHSS), a rather abstract algebraic object. It's especially nice to have a geometric understanding of these consistency conditions, since this usually comes along with some physical intuition. We expect this to be possible because so far all of the known consistency conditions have some degree of understanding along these lines.

More precisely, the decorated domain wall construction begins by considering elements
\begin{equation}\omega \in H^k(BG,\Omega^{D-k}_S)\end{equation}
in the $E_2$ page of the AHSS, which describes a decoration of certain codimension-$k$ defects of a $G$-gauge field with $D-k$ dimensional fermionic phases in $\Omega^{D-k}_S$. We expect that when $\omega$ is the Euler class of a rank-$k$ representation $V$, then it can be extended to a fully consistent decorated domain wall construction of a fermionic phase which is in the image of a Smith map based on $V$. One needs to be careful about the twists to obtain a precise statement. If something like this is true, it would place strong geometric constraints on the AHSS differentials, and be very interesting. This might also lead to a deeper understanding of Theorem \ref{thmcyclic}, which appears to be a statement about the ``extension problem" of the AHSS in this context, cf. \eqref{eqnahssidentity}.

With regards to crystalline symmetries, there is an even simpler procedure to reduce to the domain wall, first described by \cite{Song_2017,Huang_2017}. For example, suppose we have a unitary reflection symmetry across some hyperplane. We disorder the system on one side of the hyperplane by some interaction, and then add the reflection-conjugated interaction to the other side of the hyperplane to obtain a system localized to the hyperplane but still with the unitary symmetry, although now it acts internally!

One may wonder if the anomaly on the hyperplane can be understood in terms of the original anomaly and it turns out it can. By the crystalline equivalence principle of \cite{thorngren2018gauging}, we can identify the reflection SPT with an associated time reversal SPT (see also \cite{Pollmann_2012}). We see that reflection and time reversal are related by $CPT$. Then applying our reduction, we obtain a unitary internal symmetry on the hyperplane, which is the same as the unitary above, since $(CPT)^2 = 1$. More generally, if one examines the necessary twists in the crystalline equivalence principle, one finds they exactly cancel the twists in the Smith maps, so the crystalline Smith map \emph{does not} change the type of symmetry enjoyed on the Wyckoff position, although it becomes internal. See also \cite{Huang_2018}.

If one takes a subspace larger than a Wyckoff position, e.g. a coordinate plane in $\bR^3$, where our unitary $\bZ_2$ acts as parity symmetry $x,y,z \mapsto -x,-y,-z$, the symmetry goes from an orientation-reversing symmetry to an orientation-preserving rotation on the plane, which  is the crystalline analog of the 2-fold periodic structure, and becomes a 4-fold periodic structure when fermions are carefully accounted for. See \cite{shiozaki2018generalized,song2018topological,Else_2019}. When one uses equivariant homology all twists in the dimensional reduction disappear, indicating that in the crystalline setting, they are just due to Poincar\'e duality.

However, if our symmetry acts internally in the lattice model, we do have to break the symmetry to form the domain wall, and in this case it is not clear how we should define the symmetry on the wall. We leave this interesting question to future work.

\section*{Acknowledgments}
I.H is supported in part by the Clore Foundation, the I-CORE program of Planning and Budgeting Committee (grant number 1937/12), the US-Israel Binational Science Foundation, GIF and the ISF Center of Excellence. Z.K is supported in part by the Simons Foundation grant 488657 (Simons Collaboration on the Non-Perturbative Bootstrap). R.T is supported by the Zuckerman STEM Leadership Program and the NSF GRFP Grant Number DGE 1752814.

\appendix

\section{Conventions for the action of $C,P,T$}

We  work out the action of $C,P,T$ on the minimal possible fermion representation in $2+1$, $1+1$ and $0+1$ dimensions. The more general cases are treated briefly since the conclusions remain the same.

\subsection{$2+1$ Dimensions}

We take the sigma matrices to be
\begin{equation}\label{sigmas} \sigma^1=\left(\begin{matrix}0 & 1 \\ 1 & 0 \end{matrix}\right)~,\quad \sigma^2=\left(\begin{matrix}0 & -i \\ i & 0\end{matrix}\right) ~,\quad \sigma^3=\left(\begin{matrix}1 & 0 \\ 0 & -1 \end{matrix}\right)  \end{equation}
They satisfy the usual relations
\begin{equation}
 \{\sigma^i,\sigma^j\}=2\delta^{ij}~,\qquad [\sigma^i,\sigma^j]=2i\epsilon^{ijk}\sigma^k~,
\end{equation}
where $\epsilon^{123}=1$.

We will need to adapt these matrices for the Lorentzian signature in $2+1$ dimensions that we are going to use. We will denote the corresponding matrices by $\gamma^{0,1,2}$ and we will take the signature to be $(-,+,+)$ and the metric is denoted by $\eta^{\mu\nu}$ with $\mu,\nu=0,1,2$.
\begin{equation} \gamma^0=i\sigma^2~,\qquad \gamma^1=\sigma^1 ~,\quad \gamma^2=\sigma^3~. \end{equation}
These satisfy
\begin{equation}
 \{\gamma^{\mu}, \gamma^{\nu}\}=2\eta^{\mu\nu}~,\qquad [\gamma^\mu,\gamma^\nu]=2\epsilon^{\mu\nu\rho}\gamma_\rho
\end{equation}
The generators of the Lorentz group $SO(2,1)$ are just the $[\gamma^\mu,\gamma^\nu]$ such that on a two-dimensional spinor $\lambda_\alpha$ (at the origin)  the Lorentz transformations act as $\lambda'=e^{\frac12[\gamma^\mu,\gamma^\nu] \Theta_{\mu\nu}}\lambda$, where the $\Theta_{\mu\nu}$ parameterize boosts and rotations. The $\Theta_{\mu\nu}$ are real anti-symmetric matrices.
We can re-write the transformation as
 $\lambda'=e^{\gamma^\rho\Xi_\rho}\lambda$ with $\Xi_\rho=\epsilon^{\mu\nu}_{\ \ \ \rho} \Theta_{\mu\nu}$.

We can impose a Majorana condition on $\lambda$ by noting that  all the $\gamma$ matrices are real and hence we can take $\lambda_\alpha$ to be real.
We define $\bar \lambda\equiv \lambda^T\gamma^0$ and see that it transforms as
\begin{equation}
 \bar \lambda\to \lambda^T (e^{\gamma^\rho\Xi_\rho})^T\gamma^0=\lambda^T\gamma^0 e^{-\gamma^\rho\Xi_\rho}=\bar \lambda e^{-\gamma^\rho\Xi_\rho}~.
\end{equation}
As a result, $\bar \lambda\lambda$ is invariant and $\bar \lambda\gamma^\mu \partial_\mu \lambda$ is likewise an invariant.
Note that
\[(\bar \lambda\gamma^\mu \partial_\mu \lambda)^\dagger =-\partial_\mu \lambda^T(\gamma^\mu)^T\gamma^0  \lambda=
\partial_\mu \lambda^T\gamma^0\gamma^\mu  \lambda=\partial_\mu \bar\lambda\gamma^\mu  \lambda,\]
 and hence integrating by parts we get a minus sign and hence we need to put an $i$ in front of the kinetic term as well as in front of the mass term
\begin{equation}
 \int d^3x \ i\bar \lambda\gamma^\mu \partial_\mu \lambda+iM\bar \lambda\lambda~.
\end{equation}

Time reversal symmetry and parity act as follows:
\begin{equation}T:  \lambda(x^0,x^1,x^2)\to\pm \gamma^0\lambda(-x^0,x^1,x^2)~,\end{equation}
\begin{equation}P:  \lambda(x^0,x^1,x^2)\to\pm \gamma^1\lambda(x^0,-x^1,x^2)~.\end{equation}
The signs are uncorrelated and arbitrary in principle. But we will take the two signs in $P,T$ to be always correlated.

For the mass term: $T(\bar \lambda \lambda)=-\lambda^T\gamma^0\gamma^0\gamma^0\lambda = \bar \lambda\lambda$. Taking into account the factor of $i$ in the mass term and that $T$ is anti-linear, we find that under time reversal symmetry $M\to -M$. Applying parity, $P(\bar \lambda \lambda)=\lambda^T\gamma^1\gamma^0\gamma^1\lambda =- \bar \lambda\lambda$
and parity is a linear operator so the mass is again seen to be odd under parity.

The action on the kinetic term is $T(\bar \lambda\gamma^\mu\partial_\mu \lambda)=
-\bar \lambda\gamma^0\partial_0 \lambda-\bar \lambda\gamma^i\partial_i \lambda=-\bar \lambda\gamma^\mu\partial_\mu \lambda $
and again together with the factor of $i$ in front of the kinetic term we find that the kinetic term is time reversal even.
Similarly, it is parity even.

Three important properties that we immediately recognize are
\begin{equation} \label{threed}
 \begin{split}
   (CPT)^2 & =1~, \\
  T^2 & =(-1)^F~, \\
  T \cdot CPT & = (-)^F CPT \cdot T \ .
 \end{split}
\end{equation}

It is important to note that because $T$ is anti-linear all the three relations in~\eqref{threed} are invariant under multiplying $T$ by $i$. For the same reason the first relation is also invariant under multiplying $P$ by $i$. In the last relation we can add an arbitrary $c$ number phase but the existence of $(-1)^F$ is invariant.

\subsection{$1+1$ Dimensions}

The signature is chosen  to be $(-,+)$ and in terms of the sigma matrices~\eqref{sigmas}  we have as before
\begin{equation}
 \gamma^0=i \sigma^2~,\gamma^1=\sigma^1~.
\end{equation}
The boost transformations are given by $e^{\beta [\gamma^0,\gamma^1]}$ with real $\beta$. Therefore the representation is reducible and we can call the boost eigenstates as $\lambda_+$ and $\lambda_-$. The non-chiral fermion is given simply by
\begin{equation}
 \lambda=\left(\begin{matrix} \lambda_+\\ \lambda_-\end{matrix}\right) \ .
\end{equation}
The kinetic term and mass term are as in $2+1$ dimensions (except that now the index $\mu$ ranges over 1,2)
\begin{equation}
 {\mathcal L}=i\lambda^T\gamma^0\gamma^\mu\partial_\mu \lambda+iM\lambda^T\gamma^0\lambda
\end{equation}

For now let us assume that charge conjugation symmetry acts trivially, time reversal symmetry acts as before and parity acts as before:
\begin{equation}T:  \lambda(x^0,x^1)\to\pm \gamma^0\lambda(-x^0,x^1)~,\end{equation}
and we note that the mass term is odd under time reversal symmetry. In particular, as expected time reversal symmetry acts by exchanging fermions that are moving to the left with fermions that are moving to the right.

We find again the relations
\begin{equation} \label{twod}
 \begin{split}
  (CPT)^2 & =1~, \\
  T^2 & =(-1)^F~,  \\
  T \cdot CPT & = (-)^F CPT \cdot T  \ .
 \end{split}
\end{equation}

For the massless fermion $\lambda$ however we do not need to assume that charge conjugation symmetry acts trivially. Up to overall conjugation
by $(-1)^F$ there is one more nontrivial choice, where
\begin{equation}\label{chargeco}C: \lambda \to \sigma^3 \lambda~,\end{equation}
namely, it acts like fermion number only on $\lambda_-$ but not on $\lambda_+$. Sometimes it would be denoted by $(-1)^{F_L}$. This is a chiral $\bZ_2$ symmetry.
The transformation~\eqref{chargeco} commutes with boosts and hence with the Poincar\'e group. However, it does not commute with time reversal symmetry or parity. This charge conjugation symmetry likewise forbids a mass term.

With the choice~\eqref{chargeco} we find the relations
\begin{equation} \label{twodi}
 \begin{split}
  C^2 & = 1 \ , \\
  CT & = (-1)^F TC \ , \\
  (CPT)^2 & = 1 \ , \\
  T^2 & = (-1)^F \ ,  \\
  T \cdot CPT & =  CPT \cdot T ~.
 \end{split}
\end{equation}
Note that the difference in the last equation is due to the fact that this $CPT$ is not the canonical one.

\subsection{$0+1$ Dimensions}
Here the minimal fermion is just a single component fermion $\lambda$. The Lagrangian is
\begin{equation}
 {\mathcal L}=i\lambda \frac{d}{dt} \lambda
\end{equation}
This theory is however trivial since upon quantization we have one Hermitian operator $\lambda$ that also satisfies ${\lambda,\lambda}=1$ and so the Hilbert space consists of only one state.

Now let us consider a collection of such fermions
\begin{equation}
 {\mathcal L}=i\sum_{I=1}^N \lambda^I \frac{d}{dt} \lambda^I
\end{equation}
The Hamiltonian again vanishes identically and the operators $\lambda^I$ satisfy the Clifford algebra
\begin{equation}
 \{\lambda^I,\lambda^J\}=\delta^{IJ}~,
\end{equation}
in addition, the $\lambda^I$ are Hermitian. It is well known that the construction proceeds slightly differently for even and odd $N$. For even $N$ we combine the fermions into pairs (in an arbitrary fashion)
\begin{equation}
 \psi^k=\lambda^k+i\lambda^{k+N/2}~,\quad k=1,...,N/2
\end{equation}
and we have that $\{\psi^k,\psi^{k'}\}=0$, $\{\psi^k,\bar \psi^{k'}\}=2\delta^{k,k'}$ which means that we have a $2^{N/2}$ dimensional Hilbert space isomorphic to $\bigotimes_{k=1}^{N/2}|s_k\rangle $ with $s_k=\pm 1$. The $\psi^k$ acts only on the kth spin such that
\begin{equation}
 \psi^k=\sqrt2\left(\begin{matrix}0 & 1 \\ 0 & 0 \end{matrix}\right)~,\qquad \bar\psi^k=\sqrt2\left(\begin{matrix}0 & 0 \\ 1 & 0 \end{matrix}\right)
\end{equation}
and the different $\psi$'s all anti-commute otherwise. Now we need to define a time-reversal operation. We can define it in each of the  $N/2$ blocks.

The theory has charge conjugation symmetry $C$ implemented on the Hilbert space by the matrix $\left(\begin{matrix}0 & 1 \\ 1 & 0\end{matrix}\right)$.
It is instructive to consider the $U(1)^{N/2}$ symmetry inside the $SO(N)$ symmetry and require that charge conjugation symmetry reverses those charges
\begin{equation}\label{cc}
Ce^{i\alpha Q} =e^{-i\alpha Q}C \ .
\end{equation}
Then in this case it is known that if we choose the $U(1)$ charges to be integral the algebra of $C,Q$ is centrally extended while if we allow for half-integral charges then the central extension can be removed.
We can readily see that as an operator
\begin{equation}
 C^2=1~,\qquad C\psi C=\psi^\dagger~,\qquad C\psi^\dagger C= \psi~.
\end{equation}
In particular, the fundamental representation goes to the anti-fundamental representation.
It is easy to verify that this leaves the action invariant.

We can take $Q=\psi \psi^\dagger -1/2$ so that the equations above are all mutually consistent. If we do not include the factor of $1/2$, we get some central extension of the $C,Q$ algebra, which reflects the well known $O(2)$ anomaly in QM.

Similarly time reversal reverses those $U(1)$ charges. So this implies that
\begin{equation}
 e^{i\alpha Q} T = T e^{i\alpha Q} \ .
\end{equation}
Note that because of the $i$ in the exponent and the anti-linearity of $T$, $T Q = - QT$ is equivalent to the above.

We can define $T$ as the composition of complex conjugation and $C$ above, which again would lead to a central extension if the $U(1)$ charges are chose to be integral. Note that with this definition $T^2=1$.

Interestingly, we can a priori also choose a different action of $T$, which is obtained by a composition of the above-chosen time reversal symmetry and a rotation by $\pi$. This leads to time reversal symmetry acting by the combination of complex conjugation and the matrix
$\left(\begin{matrix}0 & -1 \\ 1 & 0\end{matrix}\right)$. Now we have $T^2=-1$ and taking into account that we have $N/2$ blocks we find
$T^2=(-1)^{N/2}$. For an anti-unitary operator, $T^2=-1$ cannot be converted to $T^2=1$ by multiplying the operator with $i$.

We can consider with the above conventions the transformation $CT$. It always satisfies $(CT)^2=1$. This is the analog of $CPT$. We can also think about $CT$ as fermion number because (if it is nontrivial, then) it acts like $diag(-1,1)$ which is the same as multiplying all the fermions by a minus sign.
This is why it is an unbreakable symmetry  and it is the correct analog of CPT.

It is nice to note that with the above choice of $C$ and $T$ such that $CT$ is fermion number, we find that
\begin{equation}
CT \cdot T = -T \cdot CT
\end{equation}
and taking into account that we have $N$ fermions
\begin{equation}
 CT \cdot T = (-)^{N/2}T \cdot CT~,
\end{equation}
which is analogous to the result in higher dimensions if we think of $(-1)^{N/2}$ as $(-1)^F$.

\section{Analytic Continuation and CPT}

\subsection{Analytic Continuation}

We study correlation functions
\begin{equation}\langle \phi_1(x_1,z_1) \cdots \phi_n(x_n,z_n) \rangle\end{equation}
for complex times $z_j = t_j + i\tau_j$. We can rewrite this in terms of a real time correlator
\begin{equation}\langle e^{-\tau_1 H}\phi_1(x_1,t_1)e^{(\tau_1-\tau_2) H} \phi_2(x_2,t_2) \cdots\rangle\end{equation}
We see for these correlation functions to be finite they must be imaginary-time-ordered, meaning that the $\tau$'s are increasing
\begin{equation}\tau_1 < \tau_2 < \cdots\end{equation}
so that all the exponential factors come with a negative factor. Note that we don't worry about the first exponential factor because it is absorbed by the ground state on the left.

\subsection{Reflection Positivity}

The first identity we can assert considering these correlation functions is given by unitarity, for which
\begin{equation}\phi(x,z)^\dagger = (e^{izH}\phi(x,0)e^{-izH})^\dagger\end{equation}
\begin{equation}=e^{iz^*H}\phi(x,0)^\dagger e^{-iz^*H}.\end{equation}
Thus, if $\phi(x,0)$ is a Hermitian operator, then
\begin{equation}\phi(x,z)^\dagger = \phi(x,z^*).\end{equation}
We find therefore
\begin{equation}\label{e:refpos}
    \langle \phi(x,z^*) \phi(x,z) \rangle > 0,
\end{equation}
and so on for more complicated operator insertions. Let us note that for $z = i\tau$, this reads
\begin{equation}\langle \phi(x,-i\tau)\phi(x,i\tau) \rangle > 0,\end{equation}
which is automatically imaginary-time-ordered as long as $\tau > 0$ (which is required for the states $\phi(x,z)|0\rangle$ to exist), and is moreover reflection symmetric under $\tau \mapsto -\tau$, hence the term reflection positivity.

\subsection{Time Reversal}

Now let us suppose that our theory has an anti-unitary symmetry $T$ with
\begin{equation}HT = TH.\end{equation}
Any such symmetry is regarded as a time-reversal symmetry because
\begin{equation}T^{-1} e^{itH} T = e^{-itH}\end{equation}
is functionally the same as $t \mapsto -t$. However, for imaginary time, it takes a bit more care to see what it should do.

First, let us note that anti-unitarity of $T$ means that for any pair of Hilbert space states,
\begin{equation}\langle a | b \rangle = \langle T a| T b \rangle^* = \langle T b | T a \rangle.\end{equation}
Now we consider the imaginary-time-ordered correlation function
\begin{equation}\langle \phi_1(z_1) \phi_2(z_2)\rangle,\end{equation}
ie. $\tau_1 < \tau_2$, where we have suppressed the position coordinates because they are irrelevant at this stage of the discussion. We can write this as $\langle 0 | \phi_1(z_1) \phi_2(z_2) 0 \rangle$ and apply $T$ to obtain
\begin{equation}\langle \phi_1(z_1) \phi_2(z_2)\rangle = \langle T^{-1} \phi_2(z_2)^\dagger T T^{-1} \phi_1(z_1)^\dagger T\rangle\end{equation}
\begin{equation}=\langle (T^{-1} \phi_2 T)(-z_2) (T^{-1} \phi_1 T)(-z_1)\rangle.\end{equation}
We observe that the final result is still imaginary-time-ordered, as $-\tau_2 < -\tau_1$! Thus, time reversal acts geometrically in analytic continuation as $z \mapsto -z$.

\subsection{The $CPT$ Algebra}

$CPT$ symmetry arises in Euclidean signature from the analytic continuation of a boost along a coordinate $x$ to parameter $i\pi$, for which it becomes a rotation in the $x$-$\tau$ plane. This operator is: 1) a symmetry just like an ordinary boost, 2) anti-unitary because it flips time. Note that a normal boost does not have a well-defined unitary property because it doesn't fix a time slice. We can call this operator $CPT$ because we can easily decompose it to a $P$ part, due to the space coordinate flip, a $T$ part due to the time coordinate flip, and a unitary internal $C$ part which is whatever is needed to complete the composition of the operator, in fact you can define $C = (CPT) \cdot T^{-1} \cdot P^{-1}$, where $CPT$ is the operator constructed above and $P$ and $T$ are any reflection and time reversal operators (which need not be symmetries), respectively.

Now that we have this representation, we want to argue for $(CPT)^2=1$.\footnote{Note that our claim is about the $CPT$ constructed above, it is possible to get some additional symmetry operator which has the same property of flipping space and time, with some other definition for the unitary part $C$, in which case it is possible that the property obeyed by our constructed $CPT$ wouldn't be obeyed by this alternative symmetry. Moreover, in our construction $P$ is a strict reflection, i.e., it flips one coordinate only, some textbooks use $P$ to denote multiple coordinates inversion, in which case the operator might not obey the same identities.} In cases where the theory can be analytically continued to imaginary time, one can consider the $i\pi$ boost as a $\pi$ rotation. A $\pi$ rotation on operators depends on their spin, where a factor of $(-1)^s$ accompanies the obvious coordinates rotation, i.e., even integer spin operators get a $(+1)$ factor, odd integer spin operators get $(-1)$ and half integer spin operators get a $(\pm i)$ factor. Thus, as is well known, under two $\pi$ rotations, or a $2\pi$ rotation, one gets a $(-1)^F$ factor. However, since $T$ complex conjugates as well as reflects time, the $(\pm i)$ factor that the half-integer spin operators get after the first $CPT$ operation is complex conjugated by the second $CPT$ operation and cancels the second factor and therefore in total one gets $(CPT)^2=1$.

In order to address the commutation of $T$ and $CPT$ we use similar arguments. $CPT$ by itself is, in the Euclidean analytic continuation, an anti-unitary $\pi$ rotation and since $T$ is anti-unitary, if we operate with $T$ after $CPT$ it reverses the sign of the $\pm i$ factor the half-integer operators received, compared to the other way around, when you operate first with $T$ and only then with $CPT$. Thus, in total, one gets $T \cdot CPT = (-1)^F CPT \cdot T$.

\bibliographystyle{unsrt}
\bibliography{refs}

\end{document}